\documentclass{article}
\PassOptionsToPackage{numbers,compress}{natbib}
\usepackage[preprint]{neurips_2026}

\usepackage[utf8]{inputenc} %

\usepackage{graphicx}

\usepackage[T1]{fontenc}

\usepackage[hidelinks,colorlinks]{hyperref}       %
\usepackage{url}            %
\usepackage{booktabs}       %
\usepackage{amsfonts}       %
\usepackage{nicefrac}       %
\usepackage{microtype}      %
\usepackage{pgf}

\IfFileExists{headers/config/showoverfull.config}{
	\overfullrule=1cm
}{
}

\usepackage{marginnote}

\usepackage[backgroundcolor=none,linecolor=red,textsize=footnotesize]{todonotes}

\usepackage{etoolbox}

\newbool{includeappendix}
\setbool{includeappendix}{true} %
\IfFileExists{headers/config/noappendix.config}{
	\setbool{includeappendix}{false}
}{}

\newif\ifincludeappendixx
\ifbool{includeappendix}{
	\includeappendixxtrue
}{
	\includeappendixxfalse
}

\usepackage{xr} %
\usepackage{filecontents}

\ifbool{includeappendix}{}{

	\externaldocument{appendix-labels}
}

\usepackage{xspace}

\newcommand{\eg}{e.g., }
\newcommand{\ie}{i.e., }

\newcommand{\llama}{\textsc{Llama3.1-8B}}
\newcommand{\qwen}{\textsc{Qwen3-30B}}
\newcommand{\mistral}{\textsc{Ministral-3-14B}}

\newcommand{\elifive}{\textsc{ELI5}}

\usepackage{acro} %

\usepackage{listings}

\usepackage{textcomp}

\usepackage{xcolor}

\usepackage[scaled=0.8]{beramono}

\definecolor{ckeyword}{HTML}{7F0055}
\definecolor{ccomment}{HTML}{3F7F5F}
\definecolor{cstring}{HTML}{2A0099}

\lstdefinestyle{numbers}{
	numbers=left,
	framexleftmargin=20pt,
	numberstyle=\tiny,
	firstnumber=auto,
	numbersep=1em,
	xleftmargin=2em
}

\lstdefinestyle{layout}{
	frame=none,
	captionpos=b,
}

\lstdefinestyle{comment-style}{
	morecomment=[l]//,
	morecomment=[s]{/*}{*/},
	commentstyle={\color{ccomment}\itshape},
}

\lstdefinestyle{string-style}{
	morestring=[b]",%
	morestring=[b]',%
	stringstyle={\color{cstring}},
	showstringspaces=false,%
}

\lstdefinestyle{keyword-style}{
	keywordstyle={\ttfamily\bfseries},
	morekeywords={
		function,
		constructor,
		int,
		bool,
		return,
		returns,
		uint
	},
	morekeywords = [2]{},
	keywordstyle = [2]{\text},
	sensitive=true,
}

\lstdefinestyle{input-encoding}{
	inputencoding=utf8,
	extendedchars=true,
	literate=
	{ℝ}{$\reals$}1%
	{→}{$\rightarrow$}1%
	{α}{$\alpha$}1%
	{β}{$\beta$}1%
	{λ}{$\lambda$}1%
	{θ}{$\theta$}1%
	{ϕ}{$\phi$}1%
}

\lstdefinestyle{escaping}{
	moredelim={**[is][\color{blue}]{\%}{\%}},
	escapechar=|,
	mathescape=true
}

\lstdefinestyle{default-style}{
	basicstyle=\fontencoding{T1}\ttfamily\footnotesize,
	style=numbers,
	style=layout,
	style=comment-style,
	style=string-style,
	style=keyword-style,
	style=input-encoding,
	style=escaping,
	tabsize=2,
	upquote=true
}

\lstdefinelanguage{BASIC}{
	language=C++,
	style=default-style
}[keywords,comments,strings]%

\usepackage[utf8]{inputenc} %
\usepackage[T1]{fontenc}    %
\usepackage{caption}
\usepackage{algpseudocode}
\usepackage{enumerate}
\usepackage{url}            %
\usepackage{booktabs}       %
\usepackage{amsfonts}       %
\usepackage{nicefrac}       %
\usepackage{microtype}      %
\usepackage{xcolor}         %
\usepackage{tikz,booktabs,multirow,enumitem,bm}
\usepackage{colortbl}
\usepackage{tabularx}
\usepackage{subcaption}
\usepackage{wrapfig}
\usepackage{tabulary}
\usepackage{amsmath,amsthm,amsfonts, bbm}
\usepackage{makecell}
\usepackage{rotating}
\usepackage{array}
\usepackage{siunitx}
\usepackage[most]{tcolorbox}
\usepackage{algorithm}
\usepackage{CJKutf8}
\usepackage{pifont}

\usepackage{amsmath,amsfonts,bm}
\usepackage{amsthm}
\usepackage{amssymb}
\usepackage{mathtools}

\def\1{\bm{1}}

\DeclareMathAlphabet{\mathsfit}{\encodingdefault}{\sfdefault}{m}{sl}
\SetMathAlphabet{\mathsfit}{bold}{\encodingdefault}{\sfdefault}{bx}{n}

\DeclareMathOperator*{\argmax}{arg\,max}

\newcolumntype{x}[2]{S[table-format=#1.#2,table-auto-round]}
\newcolumntype{y}[2]{>{\small} S[table-format=#1.#2,table-auto-round]}

\definecolor{hyperlinkblue}{HTML}{0000AA}
\hypersetup{citecolor=hyperlinkblue} %

\definecolor{red}{HTML}{FF0000}
\definecolor{drop}{HTML}{2596be}
\definecolor{anon}{HTML}{10a37f}
\definecolor{devil}{HTML}{84cf9d}

\lstdefinestyle{mystyle}{
    escapechar=\#,
    breaklines=true,
    basicstyle=\scriptsize\ttfamily,
    numbers=none,
    language={},
    framextopmargin=0pt,
    framexbottommargin=0pt,
    breakindent=0pt,
    showspaces = false,
    keywordstyle=\bfseries,
    showstringspaces=false,
    columns=fullflexible,
    morekeywords={Style, Consistency, Accuracy, Ethics, Score}
}

\newtcblisting{prompt}[2][]{
    arc=3pt, outer arc=3pt,
    width=.9\linewidth,
    left=1mm,
    top=0mm,
    bottom=0mm,
    title=#2, 
    colback=gray!5!white,
    colframe=black!75!black,
    fonttitle=\bfseries,
    listing only, 
    listing options={style=mystyle},
    breakable,
    #1
}

\newtcblisting{response}[2][]{
    arc=3pt, outer arc=3pt,
    width=.9\linewidth,
    left=1mm,
    top=0mm,
    bottom=0mm,
    title=#2, 
    colback=devil!5!white,
    colframe=devil!50!black,
    fonttitle=\bfseries,
    listing only, 
    listing options={style=mystyle,deletekeywords={Style}},
    breakable,
    #1
}

\newtcblisting{inference}[2][]{
    arc=3pt, outer arc=3pt,
    width=.9\linewidth,
    left=1mm,
    top=0mm,
    bottom=0mm,
    title=#2, 
    colback=red!5!white,
    colframe=red,
    fonttitle=\bfseries,
    listing only, 
    listing options={style=mystyle},
    breakable,
    #1
}

\newtcblisting{anonymize}[2][]{
    arc=3pt, outer arc=3pt,
    width=.9\linewidth,
    left=1mm,
    top=0mm,
    bottom=0mm,
    title=#2, 
    colback=anon!5!white,
    colframe=anon,
    fonttitle=\bfseries,
    listing only, 
    listing options={style=mystyle},
    breakable,
    #1
}

\lstdefinelanguage{QA}{
  morekeywords={Prompt,Base,Watermarked},
  sensitive=false,
}

\lstdefinestyle{qaStyle}{
  language=QA,
  basicstyle=\normalfont\small,      %
  keywordstyle=\bfseries\color{teal},%
  stringstyle=\color{black},         %
  numbers=none,                       %
  showstringspaces=false,            %
  frame=single,                      %
  rulecolor=\color{gray!50},         %
  backgroundcolor=\color{gray!10},   %
  columns=fullflexible,              %
  xleftmargin=1em, xrightmargin=1em, %
  aboveskip=1em, belowskip=1em       %
}

\definecolor{outerbg}{HTML}{F6F6F6}      %
\definecolor{outerframe}{HTML}{D0D0D0}   %

\definecolor{modelAback}{HTML}{DD8452}   %
\definecolor{modelAframe}{HTML}{DD8452}  %

\definecolor{modelBback}{HTML}{4C72B0}   %
\definecolor{modelBframe}{HTML}{4C72B0}  %

\newtcolorbox{promptbox}[1][]{%
  breakable,
  colback=white,
  colframe=black!25,
  fonttitle=\bfseries\small,
  coltitle=black!80,
  title=User~Prompt~{#1},
  boxsep=5pt,
  top=4pt,bottom=4pt,
  arc=1mm,
}

\newtcolorbox{modelbox}[3][]{%
  breakable,
  colback=#2!75,
  colframe=#3!85,
  coltitle=#3!10!black,
  fonttitle=\bfseries\small,   %
  title={#1},
  boxsep=5pt,
  top=4pt,bottom=4pt,
  arc=1mm,
}

\renewcommand{\S}{Sec.~}

\usepackage[capitalize,noabbrev]{cleveref}

\crefformat{section}{\S#2#1#3}

\crefrangeformat{section}{\S#3#1#4\crefrangeconjunction\S#5#2#6}

\crefmultiformat{section}{\S#2#1#3}{\crefpairconjunction\S#2#1#3}{\crefmiddleconjunction\S#2#1#3}{\creflastconjunction\S#2#1#3}

\newcommand{\crefrangeconjunction}{--}

\crefname{listing}{Lst.}{listings}
\crefname{line}{Lin.}{Lin.}
\crefname{appendix}{App.}{App.}

\newcommand{\appref}[1]{%
	\ifbool{includeappendix}{\cref{#1}}{the appendix}%
}
\newcommand{\Appref}[1]{%
	\ifbool{includeappendix}{\cref{#1}}{The appendix}%
}

\usepackage{aliascnt}
\theoremstyle{plain}

\newaliascnt{lemma}{theorem}
\newtheorem{lemma}[lemma]{Lemma}
\aliascntresetthe{lemma}
\crefname{lemma}{Lemma}{Lemmas}
\Crefname{lemma}{Lemma}{Lemmas}

\theoremstyle{definition}

\theoremstyle{remark}

\newcommand{\OurTitle}{Every Bit, Everywhere, All at Once:\\A Binomial Multibit LLM Watermark}

\title{\OurTitle{}}
\author{%
  Thibaud Gloaguen, Robin Staab, Mark Vero, Martin Vechev \\
  ETH Zurich \\
  \texttt{thibaud.gloaguen@inf.ethz.ch} \\
}

\begin{document}

\maketitle

\begin{abstract}
	With LLM watermarking already being deployed commercially, practical applications increasingly require \emph{multibit} watermarks that encode more complex payloads, such as user IDs or timestamps, into the generated text. 
In this work, we propose a fundamentally new approach for multibit watermarking: introducing binomial encoding to directly encode every bit of the payload at every token position. 
We complement our approach with a \emph{stateful encoder} that during generation dynamically redirects encoding pressure toward underencoded bits. 
Our evaluation against 8 baselines on up to 64-bit payloads shows that our scheme achieves superior message accuracy and robustness, with the gap to baseline methods widening in more relevant settings (i.e., large payloads and low-distortion regimes). 
At the same time, we challenge prior works’ evaluation metrics, highlighting their lack of practical insights, and introduce \emph{per-bit confidence scoring} as a practically relevant metric for evaluating multibit LLM watermarks.
\vspace{-0.1in}

\end{abstract}

\section{Introduction}
\label{sec:intro}

LLM watermarking has emerged as a leading approach for detecting AI-generated content, with major companies and regulators moving toward deployment~\citep{euai}.
Initial watermarking research focused on \emph{zero-bit} watermarks, which answer a single binary question: was this text generated using a given watermark?
While technically sufficient for detection, emerging practical applications demand richer watermarking signals that, \eg allow model providers to embed individual user IDs,  encode timestamps, or include licensing metadata.
These use cases require \emph{multibit} watermarks that encode an $m$-bit payload into the model output and are able to reliably decode it from watermarked text.

\paragraph{Multibit Watermarking}
To enable the inclusion of such multibit information, existing schemes overwhelmingly rely on \emph{position allocation}~\cite{mpac,rsbh,mirrormark}: at each generation step, the watermark selects a single bit position from the payload and encodes only that bit using a chosen zero-bit scheme.
The full message is then recovered by aggregating per-token evidence across many tokens.
Yet, allocated positions can be uneven~\citep{mirrormark}, and encoding only one bit per token has been found to be sub-optimal~\citep{arcmark}, limiting the effectiveness of such schemes in practice (see \cref{sec:evaluation}).
At the same time, alternative, more powerful, approaches that avoid position allocation, \eg Cycle-Shift~\cite{three_bricks} and ArcMark~\cite{arcmark}, are computationally limited to short payloads.
Hence, the question remains whether we can design a more powerful multibit watermarking scheme that effectively encodes longer bitstrings without having to assign individual bits to individual generation steps?

\paragraph{This Work: Multibit Watermarking via Binomial Encoding}
We answer this question affirmatively with a powerful multibit watermarking scheme that encodes large payloads without position allocation (illustrated in \cref{fig:accept}).
Our key insight is to replace \emph{fully encoding individual bits in specific locations} with \emph{partly encoding every bit across every location} via \emph{binomial encoding}.
Given an $m$-bit message, at each generation step, we sample $m$ independent Bernoulli score vectors and flip each according to the corresponding bit value, yielding a \emph{binomial score} that aggregates alignment across all $m$ bits (\cref{fig:accept}, blue shade).
The watermarked distribution then favors tokens whose scores align with the full message (\cref{fig:accept}, middle).
Decoding reduces to a majority vote per bit, with statistical significance provided by two-sided binomial tests (\cref{fig:accept}, right).
We further introduce a \emph{stateful encoder} that, during generation, tracks which bits are already well-encoded and redirects encoding pressure toward bits that are not yet correctly recovered.
Our evaluation (\cref{sec:evaluation}) shows that our approach outperforms prior works in most practical scenarios.

Beyond the method itself, we also revisit how multibit watermarks should be evaluated under practical deployment conditions.
Prior works~\cite{mpac, rsbh, mirrormark} typically report bit accuracy and message accuracy assuming the input text is already known to be watermarked; that is, they answer ``given a watermarked text, how much of the encoded message can we decode?''
They do not address the broader deployment question: given an arbitrary text, can we first detect whether it is watermarked and, if so, reliably decode the embedded message?
Since most texts encountered in practice are not watermarked, ignoring false positives in message recovery falls short of real-world deployment constraints.
Notably, we find in \cref{sec:considerations} that while existing approaches can be post-hoc adapted with \emph{error-detection codes}, this comes at the cost of reduced capacity.
In contrast, binomial watermarks directly support \emph{per-bit confidence scoring}, enabling even statistically sound recovery of sub-messages.

\paragraph{Key Contributions.} Our main contributions are:
\begin{itemize}
    \item We introduce the first multibit LLM watermark that encodes the full message bitstring at every token position via \emph{binomial encoding}, avoiding existing shortcomings (\cref{subsec:multibit_encoding_decoding}).

    \item We propose a \emph{stateful encoder} that dynamically prioritizes underperforming bits during generation, significantly improving bit accuracy without affecting decoding (\cref{subsec:multibit_improved}).

    \item We challenge prior multibit evaluation metrics, highlighting their limited interpretability, and propose per-bit confidence scoring as an alternative metric (\cref{sec:considerations}). 

    \item We provide a comprehensive evaluation against 8 baselines across 3 bitstring lengths (16, 32, 64 bits), showing that our scheme achieves superior bit accuracy, message accuracy, and robustness, with the gap widening at longer payloads and in low-distortion regimes (\cref{sec:evaluation}).
\end{itemize}

\section{Background and Related Work}
\label{sec:related_work}

\begin{figure}
    \centering
    \includegraphics[width=\linewidth]{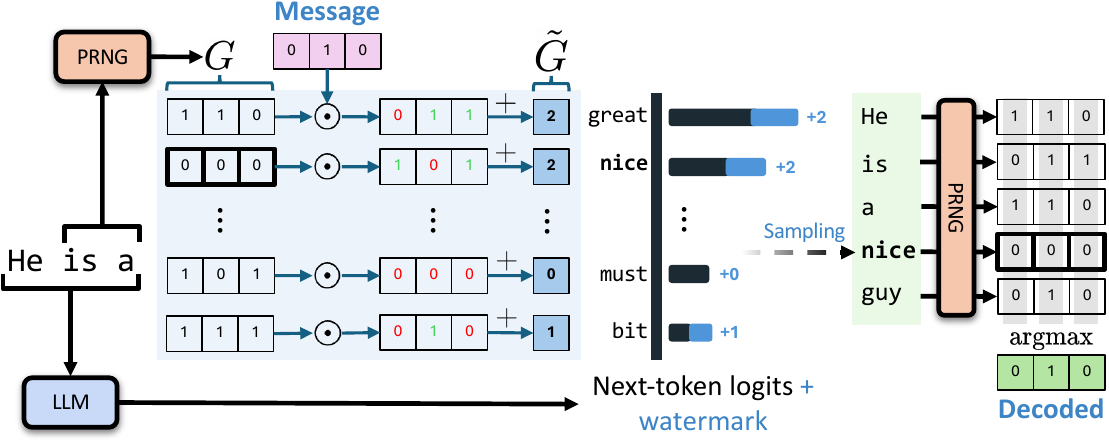}
    \caption{\textbf{Overview of Our Multibit Watermark Algorithm Encoder and Decoder:} Given an $m$-bit bitstring (here \texttt{010}), at each step of the generation process, we use the context to seed a pseudorandom number generator (PRNG) that samples $m$ Bernoulli variables ($G$) for each token in the vocabulary. We encode our bitstring with an \emph{XNOR} operation and sum the encoded Bernoulli variables to obtain binomial token scores ($\tilde{G}$). We then modify the next-token logit distribution with $\tilde{G}$ to obtain our watermarked distribution. For decoding the bitstring from a text, we use the same PRNG to retrieve the $m$ Bernoulli variables for every token in the text. We then decode the bitstring by taking the majority bit at each bit position, using $m$ binomial tests to obtain per-bit confidence.}
    \label{fig:accept}
    \vspace{-0.1in}
\end{figure}

In this section, we introduce relevant related work in (multibit) LLM watermarking.

\paragraph{LLM Watermarks}
The goal of (zero-bit) LLM watermarks is to insert a detectable signal into text generated by an LLM.
At each step of the generation process, the previously generated context and a \emph{private key} seed a pseudorandom number generator, which assigns a \emph{score} to each token in the vocabulary.
Then, the next-token distribution produced by the LLM is biased to favor tokens with higher scores.
For detection, given a sequence of tokens (text), we aggregate the individual token scores.
We call this aggregation the detection \emph{statistic}. 
For instance, with Red-Green watermarks~\citep{kgw1}, this statistic is the number of green tokens.
If the text was generated without knowledge of the scores, the statistic is random.
However, when using the watermark, the statistic is biased toward higher values.
Importantly, for deploying detection at internet-scale, the statistic should be model-free, \ie it should not require access to the LLM.

Popular approaches commonly build on \citet{kgw1}, which biases the logits of the next-token distribution according to binary scores.
\citet{synthid} adapts this by proposing a tournament sampling strategy, where tokens compete against each other and the highest-scoring tokens are selected.
\citet{aar} leverages the Gumbel-max trick to deterministically sample the next token.
Lastly, \citet{dipmark,mcmark} reweight the next-token distribution according to the ranking of the token scores.
Each watermark approach strikes a trade-off between the watermark impact on quality and its detectability.
Notably, most~\citep{synthid, aar, kth, dipmark, mcmark} are distortion-free: in expectation over the private key they do not modify the LLM next-token distribution. 

\paragraph{Multibit LLM Watermarks}
Multibit LLM watermarks inject a bitstring (\eg user ID, timestamp) into the generated text that can later be decoded.
\citet{three_bricks} (\emph{Cycle-Shift}) proposes treating each bitstring as a private key and then using any zero-bit watermarking algorithm.
For decoding, they compute the statistic with each possible key and decode to the key with the highest statistic.
While effective for small messages, this method does not scale to longer ones as decoding time increases exponentially with the message length.
\citet{mpac} uses position allocation (\emph{MPAC}) to encode the bitstring with any zero-bit watermarking algorithm: at each generation step, they use the previously generated context to pseudorandomly select a position within the bitstring.
Given the bit value at this position, they modify the score by taking its complement (defined as $F^{-1}(1 - F(\text{score}))$, where $F$ is the score CDF and $F^{-1}$ its quantile function) if the bit value is $1$.
For decoding, they accumulate the scores and complementary scores respectively for each position and compute a (position-wise) statistic with either.
The decoded bit value for a position is set to $0$ if its statistic computed with the scores is higher than with the complementary scores, and $1$ otherwise.
\citet{mpac} also suggests extending MPAC with Cycle-Shift: for each position, a sub-bitstring can be encoded using Cycle-Shift instead of complementary scores.
This method also does not scale to longer bitstrings because decoding time increases exponentially with the message length.

Various prior works build on MPAC and Cycle-Shift.
\citet{rsbh} (\emph{RSBH}) combines a balanced position allocation algorithm with Red-Green watermarks~\citep{kgw1} and error-correcting codes to increase robustness.
\citet{stealthink} (\emph{StealthInk}) combines MPAC with DiPMark~\citep{dipmark}, and \citet{bimark} (\emph{BiMark}) combines MPAC with an unbiased reweighting strategy to minimize the impact of the watermark on quality.
More recently, \citet{mirrormark} (\emph{MirrorMark}) proposes CABS, improving MPAC to better balance the selected position, and combines it with SynthID-text~\citep{synthid}.
\citet{mc2mark} (\emph{MC2Mark}) expands McMark~\citep{mcmark} to the multibit setting.
Lastly, \citet{arcmark} (\emph{ArcMark}) utilizes random linear channel codes and optimal transport on a unit circle to embed the watermark, ensuring that each generated token contains some information about the entire message.
However, for ArcMark, similarly to Cycle-Shift, the decoding time also scales exponentially with the message length~\citep{arcmark}.

Importantly, all approaches either only encode a sub-bitstring per token and must rely on position allocation to encode longer bitstrings or are intractable for longer messages~\citep{three_bricks,arcmark}.
In this work, we introduce the first multibit watermark that can effectively encode any bitstring length without position allocation, and, crucially, is more powerful than prior works, as we show in \cref{sec:evaluation}. %

\section{Our Method}
\label{sec:method}

Next, we introduce our multibit LLM watermarking algorithm based on \emph{binomial encoding}.
First, we explain how our approach encodes and decodes payloads into generated token sequences~(\cref{subsec:multibit_encoding_decoding}).
Then, we introduce an optional stateful encoder that significantly improves the bit accuracy~(\cref{subsec:multibit_improved}).
Lastly, we combine all our elements into an end-to-end watermarking algorithm~(\cref{subsec:full_watermark}).

\subsection{Encoding and Decoding the Bitstring}
\label{subsec:multibit_encoding_decoding}

\paragraph{Notation}
We adopt the notation from~\citet{gloaguen2026unifiedframeworkllmwatermarks}.
At each token position $t$, given the current context $\omega_{<t}\in\Sigma^{t-1}$, let $p_t \in \Delta(\Sigma)$ be the next-token probability distribution given by the LLM.
We generate, using the context $\omega_{<t}$ and the private key, (pseudorandom) watermark scores $G_t \in \mathbb{R}^{|\Sigma|}$ (\ie one score per candidate token), and we denote by $q(G_t,p_t) \in \Delta(\Sigma)$ the corresponding watermarked next-token probability distribution used for sampling at position $t$.

We note that as in \citep{gloaguen2026unifiedframeworkllmwatermarks}, $q(G_t,p_t) \in \Delta(\Sigma)$ generalizes a range of watermarking schemes, such as SynthID or Red-Green watermarks.
In particular, this means that our multibit en-/decoding approach can be easily adapted to use most zero-bit watermarks.
For ease of illustration, both this section and \cref{fig:accept} present our method using Red-Green watermarks, whereas we provide details on the exact $q(G_t,p_t)$ in \cref{subsec:full_watermark}.
With Red-Green watermarks, we have $G_t \in \{0,1\}^{|\Sigma|}$ (where $0$ are red tokens and $1$ green tokens) and, more precisely, each entry is an i.i.d.\ Bernoulli (pseudo)-random variable with probability $0.5$. The watermarked next-token probability distribution is given by
\begin{equation} \label{eq:red_green_singlebit}
    q(G_t,p_t) \propto p_t \exp(\delta G_t),
\end{equation}
where $\delta > 0$ is a hyperparameter that balances the watermark detectability-quality trade-off.

\paragraph{Binomial Encoding}
Let $M \in \{0,1\}^m$ be the $m$-bit bitstring that we want to encode in the generated text.
Unlike most prior works~\citep{mpac}, we propose encoding the full bitstring $M$ at every position $t$.
Our key insight is to encode the $m$ bits directly into the scores $G_t$.
To do so, at each position $t$ we sample $m$ independent Bernoulli(0.5) score vectors $G_t^1,\dots,G_t^m \in \{0,1\}^{|\Sigma|}$ (i.e., each vector defines a split of the entire vocabulary).
The score vector $G_t^i$ is used to encode the $i$-th bit of our bitstring.
For each $i \in \{1,\dots,m\}$, we encode the $i$-th bit $M_i$ using complementary scores,
\begin{equation} \label{eq:bernoulli_transform}
    \tilde{G}^i = \left\{
    \begin{array}{ll}
        G^i     & \mbox{if } M_i = 1, \\
        1 - G^i & \mbox{otherwise.}
    \end{array}
    \right.
\end{equation}
At position $t$, we apply \cref{eq:bernoulli_transform} to each $G_t^i$ and obtain complementary score vectors $\tilde{G}_t^1,\dots,\tilde{G}_t^m \in \{0,1\}^{|\Sigma|}$, allowing us to construct the final score vector $\tilde{G}_t \coloneq \sum_{i=1}^m \tilde{G}_t^i \in \{0,\dots,m\}^{|\Sigma|}$.
Given a token $u \in \Sigma$, $\tilde{G}_t(u)$ indicates how many components $\tilde{G}_t^i(u)$ align with the original bitstring $M$ (\ie it gives the bit accuracy when decoding the bitstring from token $u$ at position $t$).
Sampling tokens with larger $(\tilde{G}_t)_u$, \ie with higher scores, therefore increases the expected bit accuracy.
Hence, we can compute $q(\tilde{G}_t,p_t)$, \ie the watermarked next-token probability distribution that favors high-scoring tokens, which yields the embedding step of our watermark at position $t$. 
We illustrate this encoding process in \cref{fig:accept} (blue shade).

With the Red-Green watermark, the multibit watermarked distribution is then given by
\begin{equation}
    q(\tilde{G}_t,p_t) \propto p_t \exp(\delta \tilde{G}_t).
\end{equation}
This probability distribution biases sampling towards tokens with larger $\tilde{G}_t$.
The only difference to \cref{eq:red_green_singlebit} is that, instead of red/green Bernoulli scores, we now use message-bitstring-aligned binomial scores. We illustrate this adjusted Red-Green watermark in \cref{fig:accept} (middle).

\paragraph{Decoding from a Text}
Let $\omega \in \Sigma^*$ be a sequence of tokens.
Decoding aims to extract a bitstring $\hat{M}$ from $\omega$ (assuming our encoding), without relying on the original bitstring $M$ or the LLM next-token probability distributions.
For each token position $t$ (where $\omega_t \in \Sigma$ denotes the token in the sequence), we recompute the same pseudorandom Bernoulli scores $G_t^1(\omega_t),\dots,G_t^m(\omega_t) \in \{0,1\}$ as during encoding (using $\omega_{<t}$ and the private key).
By \cref{eq:bernoulli_transform}, we know that for all $i$, $G_t^i(\omega_t)$ tends to align with the value of $M_i$, \ie if $G_t^i(\omega_t) = 0$ then $M_i$ is likely $0$, and vice versa. Hence, the decoded message at position $t$ is $\hat{M}^t = [G_t^1(\omega_t),\dots,G_t^m(\omega_t)] \in \{0,1\}^m$.
For the final message, we treat each token in the sequence independently and aggregate the per-token evidence using majority vote,
\begin{equation}
    \hat{M} = \text{round}\left( \frac{1}{|\omega|} \sum_{t=1}^{|\omega|} \hat{M}^t \right).
\end{equation}
Importantly, our decoding method also provides statistical significance for each decoded bit.
Under the null hypothesis (\ie assuming the text is generated without the watermark), we can assume that for each $i \in \{1,\dots,m\}$ the sequence $(\hat{M}^t_i)_{t=1}^{|\omega|}$ behaves as i.i.d.\ Bernoulli$(0.5)$ random variables. We can therefore use a two-sided Binomial test to assess whether the decoded bit $\hat{M}_i$ is significant.
We explain how to leverage this property in \cref{sec:considerations}, and illustrate the decoding in \cref{fig:accept} (right).

\paragraph{Zero-Bit Detection}
Our multibit watermark can also be used as a zero-bit watermark to decide whether the text is watermarked using our binomial approach.
While we could directly aggregate the statistical tests from the multibit decoding algorithm, we propose a more powerful test in \cref{app:zerobit_detection}.

This concludes our baseline multibit encoding and decoding algorithm. We provide further practical details in \cref{app:experimental_details} and a full algorithmic description in \cref{app:algorithms}.

\subsection{Improving the Bit Accuracy with a Stateful Encoder}
\label{subsec:multibit_improved}

Next, we increase the bit accuracy of our multibit watermark by improving the encoding scheme from \cref{subsec:multibit_encoding_decoding}, while keeping the decoding strategy fixed.
Our idea is that, when sampling a token at position $t$, we can use our knowledge of the partially generated sequence to assess which bits are already well decoded and which are not.
Then, instead of favoring all bits independently via $\tilde{G}_t$, we weight the importance of each bit, favoring bits that are not yet correctly decoded.
Specifically, to quantify the importance of each bit, we answer for each bit the following question: what is the expected accuracy of the watermarking algorithm if at this step this bit is correctly encoded?

\paragraph{Expected (Worst-Case) Bit Accuracy}
Let $\omega_{\le t} \in \Sigma^t$ be the partially generated sequence.
We introduce the current unnormalized decoded message at position $t$,
\begin{equation} \label{eq:dit_definition}
    \forall i \in \{1,\dots,m\}, \quad d_t^i = 2 \sum_{k=1}^{t} \tilde{G}_k^i(\omega_k) - t,
\end{equation}
where $\tilde{G}_k^i(\omega_k)\in\{0,1\}$ is the complementary score (for bit $i$) of the realized token $\omega_k$ at position $k$.
By design, $\tilde{G}_k^i(\omega_k) = 1$ if the $k$-th token encodes the $i$-th bit correctly. 
Because we use majority voting~(\cref{subsec:multibit_encoding_decoding}), the $i$-th bit is currently correctly decoded if and only if $d_t^i \ge 0$ (ignoring ties).

Let $T \in \mathbb{N}$ denote the total number of generated tokens.
Assuming that, after step $t$, the remaining tokens are generated independently of the watermark (\ie a worst-case scenario), the expected accuracy of the $i$-th bit is given by $P[d_T^i \ge 0 \mid d_t^i]$.
The total expected bit accuracy is therefore the sum across all bits of $P[d_T^i \ge 0 \mid d_t^i]$.
Because the goal of our watermark encoding algorithm is to maximize the expected bit accuracy, instead of using $\tilde{G}$ as scores, we should directly use the expected bit accuracy with the closed-form expression from \cref{lemma:expected_bit_acc_under_null}.
We defer its proof to \cref{app:proofs}.

\newcommand{\ExpectedBitAccUnderNullStatement}[1]{%
    Let $t \in \mathbb{N}$.
    Let $\omega_{\le t} \in \Sigma^t$ be a partially generated sequence, and let $d_t^i$ denote the signed decoding statistic for bit $i$, as defined in \cref{eq:dit_definition}.
    Assuming that future tokens are generated independently of the watermark, we have for all $i \in \{1,\dots,m\}$ and $T > t$
    \begin{equation}
        #1
        P[d_T^i \ge 0 \mid d_t^i] \approx \Phi\left(\frac{d_t^i}{\sqrt{T-t}}\right),
    \end{equation}
    where $\Phi$ is the standard normal CDF.
}

\begin{lemma} \label{lemma:expected_bit_acc_under_null}
    \ExpectedBitAccUnderNullStatement{\label{eq:expected_bit_accuracy_formula}}
\end{lemma}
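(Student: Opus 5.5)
The plan is to write $d_T^i$ as $d_t^i$ plus a sum of independent symmetric increments and then apply the central limit theorem. By \cref{eq:dit_definition},
\[
d_T^i - d_t^i = \sum_{k=t+1}^{T} \bigl(2\tilde{G}_k^i(\omega_k) - 1\bigr) \eqqcolon \sum_{k=t+1}^{T} X_k .
\]
Under the assumption that future tokens are generated independently of the watermark, the realized token $\omega_k$ does not depend on the fresh pseudorandom vector $G_k^i$. So $G_k^i(\omega_k)$ is Bernoulli$(0.5)$, and so is $\tilde{G}_k^i(\omega_k)$, because complementation preserves Bernoulli$(0.5)$. Hence each $X_k$ is a Rademacher variable, with mean $0$ and variance $1$. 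I will also treat the $X_k$ across different $k$ as mutually independent and independent of $\omega_{\le t}$ (hence of $d_t^i$). This is the same idealization the paper already uses under the null for the binomial test: each position's scores come from a PRNG seeded by the context, so different contexts give fresh randomness.

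Given this, conditional on $d_t^i$, the increment $S \coloneq d_T^i - d_t^i$ is a centered, rescaled binomial: $S = 2B - (T-t)$ with $B \sim \mathrm{Binomial}(T-t, 1/2)$. Its variance is $T-t$. Therefore
\[
P[d_T^i \ge 0 \mid d_t^i] = P[S \ge -d_t^i] = P\!\left[\frac{S}{\sqrt{T-t}} \ge -\frac{d_t^i}{\sqrt{T-t}}\right].
\]
By the de Moivre--Laplace central limit theorem, $S/\sqrt{T-t}$ is approximately standard normal for large $T-t$. So the right-hand side is approximately $1 - \Phi\bigl(-d_t^i/\sqrt{T-t}\bigr)$, which equals $\Phi\bigl(d_t^i/\sqrt{T-t}\bigr)$ by the symmetry of the normal distribution. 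This is exactly the statement.

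The main obstacle is making the "$\approx$" and the independence assumptions precise, rather than the computation itself. For the approximation, I would cite the Berry--Esseen bound: for Rademacher increments the third absolute moment is $1$, so the error is $O\bigl(1/\sqrt{T-t}\bigr)$ uniformly in $d_t^i$. One could also add a continuity correction, since $S$ lives on a lattice of spacing $2$ and ties are ignored. For independence, I would state the simplification explicitly: repeated contexts, which would reuse seeds, are neglected, exactly as in the null-hypothesis analysis of the decoding test. The lemma is only needed as a heuristic score for the stateful encoder, so I would present the result as an asymptotic approximation under these stated assumptions.
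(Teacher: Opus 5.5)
Your proposal is correct and takes essentially the same route as the paper's proof. Like the paper, you decompose $d_T^i = d_t^i + \sum_{k=t+1}^{T} X_k^i$ with i.i.d.\ Rademacher increments, apply the central limit theorem, and use the symmetry of $\Phi$; your Berry--Esseen error bound and your explicit independence assumption (neglecting repeated contexts) are slightly more careful than the paper, which leaves both implicit.
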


In practice, $T$ is not known in advance.
We ablate this parameter in~\cref{app:additional_experiments:ablation_remaining_bits} and find that averaging $T$ over $\mathcal{T} \coloneq \{200,300,500,1000,2000\}$ yields a significantly higher bit accuracy compared to using $\tilde{G}$.
Hence, at time $t$ for each candidate token $u \in \Sigma$, given the current pseudorandom scores $G_t^1(u),\dots,G_t^m(u)$ (and the corresponding complementary scores $\tilde{G}_t^i(u)$), we can use the following watermarking scores instead of $(\tilde{G}_t)_u$,
\begin{equation} \label{eq:stateful_scores}
    \tilde{G}'_t(u) \coloneq \sum_{i=1}^{m} \frac{1}{|\mathcal{T}|} \sum_{T \in \mathcal{T}} \Phi\left( \frac{1}{\sqrt{T}} (d_{t-1}^i + (2\tilde{G}_t^i(u)-1)) \right).
\end{equation}

\subsection{Our Watermark Algorithm}
\label{subsec:full_watermark}

We now present our end-to-end watermarking algorithm using the multibit encoding and decoding strategy described in~\cref{subsec:multibit_encoding_decoding} and~\cref{subsec:multibit_improved}, and the Soft PPL scheme from \citet{gloaguen2026unifiedframeworkllmwatermarks}.
We select the Soft PPL scheme as it provides, in its zero-bit variant, a high detectability-quality trade-off~\citep{gloaguen2026unifiedframeworkllmwatermarks}.
In \cref{app:synthid_multibit}, we alternatively propose using the SynthID scheme from~\citet{synthid} to build a distortion-free multibit watermark based on our encoding and decoding strategy from~\cref{subsec:multibit_encoding_decoding}.

At each token position $t$, let $\omega_{<t} \in \Sigma^{t-1}$ be the currently generated sequence.
Given $\omega_{<t}$, we use our LLM to generate a next-token probability distribution $p_t\in\Delta(\Sigma)$.
At the same time, using the $k$ previously generated tokens as context (\ie $\omega_{t-k:t-1}$), the watermark private key, and our fixed $m$-bit message $M \in \{0,1\}^m$, we pseudorandomly sample binary score vectors $G_t^1,\dots,G_t^m \in \{0,1\}^{|\Sigma|}$, from which we derive the stateful score $\tilde{G}'_t(u)$ using \cref{eq:stateful_scores}.
We then compute the watermarked next-token distribution using the Soft PPL watermarking scheme from \citet{gloaguen2026unifiedframeworkllmwatermarks},
\begin{equation} \label{eq:soft_ppl_wm}
    \forall u \in \Sigma, \quad q(\tilde{G}'_t,p_t)_u = \mathbf{1}\left\{u = \argmax_{v\in\Sigma}\left(\tilde{G}'_t(v) + \lambda \log (p_t)_v\right)\right\},
\end{equation}
where $\lambda > 0$ is chosen such that $\mathbb{E}_{G}\left[ q(\tilde{G}'_t,p_t)\cdot \log p_t \right] = p_t \cdot \log p_t - \varepsilon$, and $\varepsilon \ge 0$ is a scheme parameter.
Alternatively, as we show in \cref{app:unconstrained_soft_ppl}, we can directly use $\lambda > 0$ as the scheme parameter, yielding similar performance.
For detection, we use the algorithm described in \cref{subsec:multibit_encoding_decoding}.

\section{Considerations for Multibit Watermarks}
\label{sec:considerations}

In this section, we take a step back from watermark design and instead consider how practical multibit watermarks should be evaluated.
In particular, we find that, inadvertently, most prior works evaluate the performance of multibit watermarks \emph{assuming that the text is already known to be watermarked}.

\paragraph{Evaluation of Zero-Bit Watermarks}
The goal of zero-bit watermarks is to decide whether a given input text was generated by our (watermarked) LLM or not.
Most prior works~\citep{kgw1,kth} rely on statistical testing and report the True Positive Rate at a given False Positive Rate (TPR@$\alpha$\%FPR) as a metric for watermark performance.
With $\text{TPR@$\alpha$\%FPR} \coloneq x \in [0,100]$, one knows that when running the watermark detection on \emph{any text}, (i) if the text was generated by the watermarked model, the detector detects the watermark $x$\% of the time and (ii) if the text was not generated by the watermarked model, it wrongly detects the watermark $\alpha$\% of the time.
Yet, reported metrics in prior multibit watermark evaluations usually only measure the first aspect: they implicitly assume that the multibit decoder is always used on \emph{watermarked text} (\ie they always assume a message is embedded in the text).

\paragraph{Evaluation of Multibit Watermarks in Prior Works}
In particular, most prior works~\citep{mpac,rsbh,mirrormark} report the bit accuracy (or message accuracy) of their watermarks as measured \emph{on watermarked samples} (\ie they sample watermarked text with an encoded message $M$, and on these samples compute the decoded message $\hat{M}$).
These metrics answer the following question: \emph{given a text generated by the watermarked LLM (\ie assuming there is an embedded message), what is the probability that a given bit (respectively, the full bitstring) is correctly decoded?}
Yet, given a text, prior methods would also decode a (random) message from text, even if none had been encoded.
Crucially, this means that when decoding the watermark from \emph{any text}, these metrics provide no information about whether the decoded message actually exists.
Hence, we argue that, unlike prior metrics, evaluations of multibit watermarks should measure their capability to answer two questions: (i) is there a message embedded in a given text, and (ii) if so, can it be correctly decoded?

\paragraph{Requirements for Multibit Watermarks}
More formally (and inspired by the zero-bit case), we therefore should report metrics that answer: \emph{given any text}, what is the probability that a given bit (respectively, the full bitstring) is correctly decoded \emph{with $(1-\alpha)$ confidence that the message exists?}
We introduce a new metric to answer this question: bit accuracy at $\alpha$\% FPR (BA@$\alpha$\%FPR) (where the false positive corresponds, similar to the zero-bit case, to wrongly detecting a non-existing message).
In \cref{sec:evaluation:bit_acc_confidence}, we also adapt several prior works to be evaluated as baselines on this metric.

Alternatively, using error-detection codes with the encoded message $M$, we can use the message accuracy metric from prior work to answer the question: \emph{given any text, what is the probability that the decoded message exists?}
If this metric is $x \in (0,100)$, one correctly detects the existence of the message at least $x$\% of the time (and most error-detection codes have a low probability of incorrectly detecting the existence), and correctly decodes the message $x$\% of the time.
Yet, this approach requires additional bits to encode a message of length $m$, it provides no information on how often a valid message is wrong, and does not necessarily provide information on which bits are incorrect.
We compare the message accuracy of our approach against all baselines in \cref{sec:evaluation:message_accuracy}.

With our scheme, we find that we can trade off both metrics.
Using the baseline scores $\tilde{G}$ from \cref{subsec:multibit_encoding_decoding}, we obtain higher BA@$\alpha$\%FPR but lower message accuracy, while with the stateful encoder from \cref{subsec:multibit_improved}, we have higher message accuracy but lower BA@$\alpha$\%FPR.

\section{Evaluation}
\label{sec:evaluation}

\begin{figure}
    \centering
    \resizebox{\textwidth}{!}{\input{figures/main/comparison_mer_main.pgf}}
    \caption{\textbf{Comparison of the Message Accuracy-Quality Trade-Off:} We compare the trade-off between message accuracy and text quality ($\log \text{PPL}$) for various multibit watermarks and three bitstring lengths ($16$, $32$, and $64$ bits).
        Each sample contains between $250$ and $350$ tokens and is generated by \llama{} using prompts from \elifive{}.}
    \label{fig:main_comparison_messacc}
    \vspace{-2em}
\end{figure}

\begin{wrapfigure}{r}{0.5\textwidth}
    \centering
    \vspace{-1em}
    \resizebox{\linewidth}{!}{\input{figures/main/comparison_mer_token_length_main.pgf}}
    \caption{\textbf{Message Accuracy with Respect to the Number of Tokens:} We compare how the message accuracy scales with the number of generated tokens.
        The hyperparameters of each watermarking scheme have been selected to have a similar impact on quality in the low-distortion regime, and all schemes use a bit string of $32$ bits.}
    \label{fig:token_length_messac}
    \vspace{-2em}
\end{wrapfigure}

In this section, we evaluate the performance of our proposed multibit watermark,
first with message accuracy~(\cref{sec:evaluation:message_accuracy}) for a fair evaluation against all baselines, and then as suggested in \cref{sec:considerations}
against relevant baselines using the bit accuracy at 1\% FPR~(\cref{sec:evaluation:bit_acc_confidence}).
We also measure the robustness of our watermark against text modifications~(\cref{sec:evaluation:robustness}).
In \cref{app:additional_experiments:statistical_metrics} we evaluate the zero-bit detection performance of our scheme, and in \cref{app:additional_experiments:bit_accuracy}, following prior works, we also compare our scheme against all baselines using bit accuracy.
Lastly, in \cref{app:additional_experiments:position_allocation} we show that combining position allocation with our approach does not improve performance.

\paragraph{Experimental Setup}
Unless specified otherwise, we use \llama{} with temperature $0.7$ and top-k set to $50$. For each watermark configuration we generate replies of between $250$ and $350$ tokens to $1000$ prompts from the \elifive{} dataset~\citep{eli5}.
For the baselines, we use the recommended default parameters from prior work as detailed in \cref{app:experimental_details:baselines_watermark_config}.
For our schemes, we denote by \emph{Ours} the version using $\tilde{G}$ and \emph{Ours+} the version using $\tilde{G}'$, and discuss additional parameters in \cref{app:experimental_details:our_watermark_config}.
We sample the message bitstring uniformly at random for each completion.
We measure the perplexity of the generated samples with \qwen{}.
In \cref{app:experimental_details}, we provide further hyperparameter details and in \cref{app:additional_experiments:mistral}, we evaluate our watermark on another model (\mistral{}).

\subsection{Message Accuracy}
\label{sec:evaluation:message_accuracy}

\paragraph{Message Accuracy}
Message accuracy measures, given a watermarked text, the probability that the decoded message is correct (\ie that no single bit is incorrectly decoded).
As explained in \cref{sec:considerations}, in practice, the message accuracy metric must be interpreted with error-detection codes in mind.
If a given scheme reaches 100\% message accuracy for a bitstring of length $m$, the size of the bitstring effectively encoded is $m$ minus the number of bits used for the error-detection code.
For the most widely used error-detection family, cyclic redundancy checks~\citep{crc_error_detection}, $k$ bits are used to achieve a probability of incorrectly detecting a message of $\approx 2^{-k}$.

\paragraph{Improved Message Accuracy}
In \cref{fig:main_comparison_messacc}, we show the message accuracy--quality trade-off of different watermarking algorithms for varying bitstring lengths, using perplexity as a proxy for quality.
For larger, more practical payloads ($32$ and $64$ bits), our watermark (Ours+) outperforms prior work.
In particular, we find that the stateful encoder~(\cref{subsec:multibit_improved}) significantly improves the message accuracy, as intended, compared to the stateless version~(\cref{subsec:multibit_encoding_decoding}).
The RSBH baseline uses an error-correction code to improve message accuracy, which explains its improved performance compared to MPAC.
As mentioned in \citet{rsbh}, this approach is independent of the multibit algorithm and could also be used with our schemes (and baselines) to improve their message accuracy ad hoc.
In \cref{fig:token_length_messac}, we show how message accuracy scales with the number of tokens in the low-distortion regime with a 32-bit payload.
We find that our scheme scales significantly more favorably with the number of tokens compared to all tested baselines.
In \cref{app:additional_experiments:quality_metrics}, we show that our watermark also outperforms prior work when using average benchmark accuracy as a proxy for quality.

\subsection{Bit Accuracy at 1\% FPR}
\label{sec:evaluation:bit_acc_confidence}

\begin{figure}
    \centering
    \begingroup
    \ifdefined\mathdefault\else\def\mathdefault#1{#1}\fi
    \resizebox{\textwidth}{!}{\input{figures/appendix/tpr/comparison_ba_at_1pct_fpr_main.pgf}}
    \endgroup
    \caption{\textbf{Comparison of the Confidence Bit Accuracy-Quality Trade-Off:} We compare the trade-off between confidence bit accuracy (BA@1\%FPR) and text quality ($\log \text{PPL}$) across several multibit watermarks and three bitstring lengths ($16$, $32$, and $64$ bits). 
    Each sample contains between $250$ and $350$ tokens and is generated by \llama{} using prompts from \elifive{}.}
    \label{fig:bat_at_1_fpr_comparison}
\end{figure}

\begin{wrapfigure}{r}{0.5\textwidth}
    \centering
    \vspace{-1.5em}
    \resizebox{\linewidth}{!}{\input{figures/main/comparison_ba_at_1pct_fpr_token_length_main.pgf}}
    \caption{\textbf{Bit Accuracy at 1\% FPR with Respect to the Number of Tokens:} We compare how BA@1\%FPR scales with the number of generated tokens.
        The hyperparameters of each watermarking scheme are selected to have a similar impact on quality in the low-distortion regime, and all schemes use a bit string of $32$ bits.}
    \label{fig:token_length_bitacc_confidence}
    \vspace{-2em}
\end{wrapfigure}

\paragraph{Per-Bit Confidence}
As explained in \cref{sec:considerations}, we introduce a new, more realistic metric to measure multibit watermark performance: bit accuracy at 1\% FPR (BA@$1\%$).
For a given bit, the BA@$1\%$ FPR is equal to 1 if the decoded bit is correct and its p-value is below $0.01$.
Hence, if BA@$1\%$ FPR is $x \in (0,100)$\%, it means that (i) if the text was generated by the watermarked model, then the detector detects the watermark $x$\% of the time, and (ii) if the text was not generated by the watermarked model, then it wrongly decodes the bit (instead of detecting that there is no encoded bit) at most 1\% of the time.
To compute this metric, we need multibit watermarking schemes that provide per-bit p-values.
As stated in \cref{subsec:multibit_encoding_decoding}, our method provides per-bit p-values.
Additionally, we find that two baselines (MPAC and MirrorMark) can also be used to provide per-bit p-values.
For both MPAC and MirrorMark, by using $1$ bit per position, this is equivalent to using $m$ separate zero-bit watermarks on each position.
To get a per-bit p-value, we can therefore use the corresponding watermark two-sided statistical tests (\eg for Red-Green watermarks a two-sided binomial test).

\paragraph{Results}
In \cref{fig:bat_at_1_fpr_comparison}, we show the BA@$1\%$ FPR-quality trade-off for our watermark and the two baselines, using perplexity as a proxy for quality.
We find that our stateless approach (\cref{subsec:multibit_encoding_decoding}) yields a significantly higher BA@$1\%$ FPR than the baselines.
At the same time, the stateful version of our watermark (Ours+ from \cref{subsec:multibit_improved}) underperforms the stateless version (Ours).
As mentioned in \cref{sec:considerations}, this outcome is expected: the objective of the stateful encoder is to maximize the expected bit accuracy independently of the confidence.
Additionally, \cref{fig:token_length_bitacc_confidence} shows how BA@$1\%$ scales with token length, using 32-bit bitstrings in the low-distortion regime.
We find that, similarly to \cref{sec:evaluation:message_accuracy}, BA@$1\%$ for our scheme, using the stateless encoder, scales significantly faster than all baselines.

\vspace{-1mm}
\subsection{Robustness to Modifications}
\label{sec:evaluation:robustness}

\paragraph{Experimental Setup}
We evaluate the robustness of our scheme (Ours+) and of the baselines in the low-distortion regime with a $32$-bit bitstring.
As text modifications, we consider word deletion and synonym substitution as in~\citet{markllm}, and paraphrasing using \textsc{GPT-5-mini} as the paraphraser.
For the detectability metric, we report the bit accuracy instead of message accuracy because for all schemes (including ours), the message accuracy after mild text modification is almost null.

\paragraph{Robustness Evaluation}
\cref{tab:robustness_eval_main} shows the bit accuracy for different watermarks and text modifications.
We find that the robustness of all tested watermarks is limited, with bit accuracy sharply dropping beyond $10$\% of the words edited, and becoming almost random (\ie equal to $0.5$) under paraphrasing.
For message accuracy, it is already null at $10\%$ of the words edited.
These results suggest that robust multibit watermarking algorithms (in the low-distortion regime, and at token lengths similar to standard zero-bit watermark evaluation) are still an open problem.
Nonetheless, for all tested text modifications, our watermark outperforms all prior baselines.

  \begin{table*}[t]
      \centering
      \caption{\textbf{Robustness Evaluation:} Bit accuracy for different schemes with various text modifications (\eg word deletion, synonym substitution, and paraphrasing) averaged over $1000$ samples.
      The percentage corresponds to the percentage of words deleted/substituted.
      Each sample is generated with \llama{} using prompts from \elifive{} and using a $32$-bit bitstring.
      The highest accuracy in each row is \textbf{bolded} for readability.}
      \label{tab:robustness_eval_main}
    
      \renewcommand{\arraystretch}{1.2}
      \newcommand{\skiplen}{0.000001\linewidth} 
      \newcommand{\rlen}{0.01\linewidth} 
      \resizebox{\linewidth}{!}{%
      \begingroup 
      \setlength{\tabcolsep}{5pt} %
\begin{tabular}{lccccccccccc}
\toprule
 & \multicolumn{5}{c}{Word Deletion} & \multicolumn{5}{c}{Synonym Substitution} &  \\
\cmidrule(lr){2-6}\cmidrule(lr){7-11}
Watermark & 10\% & 20\% & 30\% & 40\% & 50\% & 10\% & 20\% & 30\% & 40\% & 50\% & \multicolumn{1}{c}{Paraphrasing} \\
\midrule
BiMark & 0.664 & 0.603 & 0.564 & 0.539 & 0.520 & 0.666 & 0.616 & 0.580 & 0.556 & 0.536 & 0.504 \\
MC2Mark & 0.664 & 0.608 & 0.571 & 0.540 & 0.519 & 0.671 & 0.619 & 0.578 & 0.551 & 0.536 & 0.509 \\
MPAC & 0.609 & 0.572 & 0.544 & 0.529 & 0.524 & 0.612 & 0.575 & 0.555 & 0.532 & 0.524 & 0.505 \\
MirrorMark & 0.539 & 0.521 & 0.505 & 0.506 & 0.500 & 0.544 & 0.523 & 0.509 & 0.511 & 0.505 & 0.498 \\
RSBH & 0.511 & 0.508 & 0.501 & 0.500 & 0.502 & 0.509 & 0.506 & 0.503 & 0.499 & 0.501 & 0.501 \\
StealthInk & 0.606 & 0.569 & 0.546 & 0.525 & 0.515 & 0.609 & 0.572 & 0.549 & 0.537 & 0.526 & 0.507 \\
Ours+ & \textbf{0.707} & \textbf{0.632} & \textbf{0.579} & \textbf{0.541} & \textbf{0.525} & \textbf{0.707} & \textbf{0.642} & \textbf{0.591} & \textbf{0.562} & \textbf{0.543} & \textbf{0.509} \\
\bottomrule
\end{tabular}
      \endgroup
      }
      \vspace{-1.5em}
    \end{table*}

\vspace{-1mm}
\section{Conclusion and Limitations}
\label{sec:conclusion}
\vspace{-1mm}

Our work introduces the first multibit watermarking algorithms that can be applied to large payloads without relying on position allocation, opening a new direction for multibit watermarking research.
Our scheme outperforms all tested baselines in most realistic scenarios.
We also highlight that prior metrics used to evaluate multibit LLM watermarks fall short of practical considerations, and introduce a new metric, BA@1\%FPR, for a more practical evaluation of multibit watermarks.

\paragraph{Limitations}
We show that the robustness of multibit LLM watermarks is still an ongoing challenge.
While our approach outperforms all evaluated baselines, it remains limited in practical scenarios.
Although we empirically show that our approach, which is not based on position allocation, outperforms position allocation-based baselines, we provide no theoretical results showing that non-position allocation-based methods are strictly better.
We argue this is an interesting direction for future work.

\message{^^JLASTBODYPAGE \thepage^^J}

\clearpage

\bibliographystyle{unsrtnat}
\bibliography{references}
\vfill
\clearpage

\message{^^JLASTREFERENCESPAGE \thepage^^J}

\ifincludeappendixx
	\newpage
	\appendix
	\onecolumn
	\crefalias{section}{appendix}
	\crefalias{subsection}{appendix}
	\section{Experimental Details}
\label{app:experimental_details}

In this section, we describe precisely the experimental setup used in our evaluation (\cref{sec:evaluation}), with the exact inference parameters used (\cref{app:experimental_details:inference_settings}), and the watermark configurations (\cref{app:experimental_details:baselines_watermark_config,app:experimental_details:our_watermark_config}).

\subsection{Inference Setting}
\label{app:experimental_details:inference_settings}

\paragraph{Models}
We use the instruction-tuned versions of \llama{} in \cref{sec:evaluation} and \mistral{} in \cref{app:additional_experiments:mistral}, with temperature $0.7$ and top-k set to $50$.
We use both models in bf16 precision.
For inference, we use vLLM~\citep{vllm} on a single NVIDIA RTX PRO 6000 Blackwell GPU.

\paragraph{Dataset}
As a prompt dataset, we use questions from \elifive{}~\citep{eli5}. 
For each watermark configuration, we use the same subset of prompts.

\subsection{Baseline Watermark Configuration}
\label{app:experimental_details:baselines_watermark_config}

Here, we describe the configuration used for our baseline watermarks. 
The notation used in this part is taken from each watermark's original paper.

\paragraph{ArcMark}
For ArcMark~\citep{arcmark}, we use the same default configuration as in~\citet{arcmark}. 
An $m$-bit message $M \in \{0,1\}^m$ is encoded as a codeword $C_M$ over $\mathbb{F}=\{0,\ldots,p-1\}$, and we set $p = 2^m$ so that the symbol alphabet matches the total number of possible messages. 
At each generation step $t$, the shared secret is $S_t=(V_t,\Pi_t)$, where $V_t \sim \mathrm{Unif}(\{0,\ldots,r-1\})$, and the channel input is formed as $z_t = \left(\frac{2\pi C_M(t)}{p} + \frac{2\pi V_t}{r} \right) \bmod 2\pi$. 
We set $r=256$ for the payload sizes we consider, implement $\Pi_t$ as an affine permutation over token IDs, and solve the optimal transport problem using the Sinkhorn algorithm with regularization coefficient $0.1$.
Because there is no official implementation for~\citet{arcmark}, we implement the ArcMark watermark algorithm from scratch.

\paragraph{BiMark}
For BiMark~\citep{bimark}, we use the same default configuration as in~\citet{bimark}. 
In particular, we use the base scaling factor $\tilde{\delta} = 1.0$ and the default number of layers $d = 10$. 
For the implementation, we fork the code from~\citet{bimark} and adapt it to work with vLLM.

\paragraph{Cycle-Shift}
For Cycle-Shift~\citep{three_bricks}, we use the AAR watermark~\citep{aar} as the underlying watermarking scheme, mirroring the default configuration from~\citet{three_bricks}.
Additionally, to obtain a distortionary version, we follow~\citet{gloaguen2026unifiedframeworkllmwatermarks} and vary $\delta$ in $\{0,0.1,0.2,0.3,0.4,0.5,0.75,1.0\}$.
For the implementation, we fork the code from~\citet{three_bricks} and adapt it to work with vLLM.

\paragraph{MC2Mark}
For MC2Mark~\citep{mc2mark}, we use the default number of layers $m = 10$.
For computational efficiency, we scale the number of segments $g$ with the number of bits.
In particular, we use $g=2$ segments for $16$ bits, $g=4$ for $32$ bits, and $g=8$ for $64$ bits.
Because there is no official implementation for~\citet{mc2mark}, we implement the MC2Mark watermark algorithm from scratch.

\paragraph{MPAC}
For MPAC~\citep{mpac}, we use Red-Green watermarks~\citep{kgw1} as the underlying watermarking scheme, mirroring the default configuration from~\citet{mpac}.
For the green list size, we use $\gamma = 0.5$, meaning that the green list size is $\gamma |\Sigma|$.
To enable faster inference, we slightly deviate from the original implementation of~\citet{kgw1} and use independent Bernoulli variables with probability $\gamma$ for the green list, where $1$ means that a given token is green. In contrast,~\citet{kgw1} enforce the green list size to be exactly $\gamma |\Sigma|$, which is equivalent to using correlated Bernoulli variables with probability $\gamma$ for the green list.
In terms of watermarking performance, the impact of this change is negligible~\citep{dlm_watermark}.
For the watermark strength, we vary $\delta$ in $\{1,2,3,4,5,6\}$.
For the implementation, we mostly reuse the code from~\citet{mpac}, but adjust it to work with vLLM.

\paragraph{MirrorMark}
For MirrorMark~\citep{mirrormark}, we use the Gumbel-max variant.
For CABS, we use the recommended configuration from~\citet{mirrormark}: we use frame size $f=3$, context window $W=4$, and max factor $\texttt{max\_factor} = 1.5$.
Because there is no official implementation for~\citet{mirrormark}, we implement the MirrorMark watermark algorithm from scratch.

\paragraph{RSBH}
For RSBH, we follow the recommended configuration from~\citet{rsbh} for the Reed--Solomon codes.
In particular, for a $16$-bit bitstring, we divide the original message into $4$ segments and then encode each segment using Reed--Solomon codes with $6$ segments and $4$ bits per segment. 
For a $32$-bit bitstring, we divide the original message into $4$ segments and then encode each segment using Reed--Solomon codes with $6$ segments and $8$ bits per segment.
For the frequency-balanced mapping, we use the realnewslike subset of the C4 dataset to estimate the frequencies.
For the watermark strength, we vary $\delta$ in $\{1,2,3,4,5,6\}$.
For the implementation, we mostly reuse the code from~\citet{rsbh}, but adjust it to work with vLLM.

\paragraph{StealthInk}
For StealthInk~\citep{stealthink}, we use one bit per position. 
Because there is no official implementation for~\citet{stealthink}, we implement the StealthInk watermark algorithm from scratch.

\subsection{Our Watermark Configuration}
\label{app:experimental_details:our_watermark_config}

For our watermark, to find $\lambda$ (\cref{eq:soft_ppl_wm}), we use a bisection algorithm on the interval $(0,100)$ with $60$ iterations.
To estimate $\mathbb{E}_{G}\left[ q(\tilde{G}'_t,p_t)\cdot \log p_t \right]$, we use a Monte Carlo estimation with $128$ samples.
For the watermark strength, we vary $\varepsilon$ in $\{0,0.1,0.2,0.3,0.5,0.7,0.9\}$.

\paragraph{Detection}
To detect the watermark, we follow the recommendation of~\citet{three_bricks}.
In particular, we de-duplicate the (context, token) pairs from a given text to ensure the statistical validity of our detection algorithm (see \cref{fig:zerobit_pvalue_calibration} in \cref{app:zerobit_detection}).

\section{Algorithms}
\label{app:algorithms}

In this section, we present the algorithmic description of our encoding (\cref{alg:encoding}) and decoding algorithms (\cref{alg:decoding}) from \cref{sec:method}.

\begin{algorithm}[H]
\caption{Encoding Multibit Algorithm}
\label{alg:encoding}
\begin{algorithmic}[1]
    \Require Prompt $\omega \in \Sigma^*$, Next-token probability $p_t \in \Delta(\Sigma)$, Message $M \in \{0,1\}^m$, Watermark private key $\xi$
    \State $s_t \gets \textsc{HashContext}(\xi,\omega_{<t})$ \Comment{Context is usually the 3 last tokens of $\omega_{<t}$}
    \State $G_t^1,\dots,G_t^m \in \{0,1\}^{|\Sigma|} \gets \textsc{PRNG}(s_t)$
    \For{$u \in \Sigma$}
        \For{$i \in \{1,\dots,m\}$}
            \State $\tilde{G}_t^i(u) \gets M_i G_t^i(u) + (1-M_i)(1-G_t^i(u))$ \Comment{Binomial encoding}
        \EndFor
        \State $\tilde{G}_t(u) \gets \sum_{i=1}^{m}\tilde{G}_t^i(u)$
    \EndFor
    \State $q_t \gets \textsc{WatermarkTransform}(p_t,\tilde{G}_t)$ \Comment{e.g., the PPL watermark (\cref{eq:soft_ppl_wm})}
    \State \Return $q_t$
\end{algorithmic}
\end{algorithm}

\begin{algorithm}[H]
\caption{Decoding Multibit Algorithm}
\label{alg:decoding}
\begin{algorithmic}[1]
    \Require Text $\omega \in \Sigma^*$, Message length $m \in \mathbb{N}$, Watermark private key $\xi$
    \State $S_1,\dots,S_m \gets 0$
    \For{$t \in \{1,\dots,|\omega|\}$}
        \State $s_t \gets \textsc{HashContext}(\xi,\omega_{<t})$ \Comment{Context is usually the 3 last tokens of $\omega_{<t}$}
        \State $G_t^1(\omega_t),\dots,G_t^m(\omega_t) \gets \textsc{PRNG}(s_t,\omega_t)$  \Comment{Same PRNG as in the encoding}
        \For{$i \in \{1,\dots,m\}$}
            \State $S_i \gets S_i + G_t^i(\omega_t)$
        \EndFor
    \EndFor
    \For{$i \in \{1,\dots,m\}$}
        \State $\hat{M}_i \gets \mathbf{1}\{S_i \ge |\omega|/2\}$ \Comment{Majority bit decoding. Ties are broken randomly}
    \EndFor
    \State $p_{\mathrm{value}} \gets \textsc{BinomialPValue}(S_1,\dots,S_m;|\omega|)$
    \State \Return $(\hat{M},p_{\mathrm{value}})$
\end{algorithmic}
\end{algorithm}

\section{Zero-Bit Watermark Detection}
\label{app:zerobit_detection}

In this section, we detail the zero-bit watermark detection we use in our work, and for prior works.
We find that some prior works have incorrectly calibrated p-values, leading to higher FPR than expected when running the watermark detection on human-generated text.

\begin{figure}
    \centering
    \begingroup
    \ifdefined\mathdefault\else\def\mathdefault#1{#1}\fi
    \resizebox{0.49\textwidth}{!}{\input{figures/appendix/zerobit/human_pvalue_original_null_check.pgf}}
    \resizebox{0.49\textwidth}{!}{\input{figures/appendix/zerobit/calibrated_pvalue_null_check.pgf}}
    \endgroup
    \caption{\textbf{Zero-Bit Watermark Detection Under the Null:} We show the empirical FPR versus the theoretical FPR for various zero-bit watermark detectors.
    The left side uses the original p-value implementation, whereas the right side uses Monte Carlo-based p-values (except for Cycle-Shift).
    To compute the p-values, we use $1000$ $200$-token-long texts extracted from the realnewslike subset of C4.
    For all watermarks, we use a bitstring length of $16$.
    }
    \label{fig:zerobit_pvalue_calibration}
\end{figure}

\paragraph{Our Zero-Bit Detector}
As mentioned in \cref{subsec:multibit_encoding_decoding}, our multibit watermark can also be used as a zero-bit watermark to decide whether a text $\omega$ is generated using our watermark or not.
We recall that $\hat{M}_i^t$ is the $i$-th decoded bit at token position $t$.
Let $S_i \coloneq \sum_{t=1}^{|\omega|} \hat{M}^t_i$ for each $i \in \{1,\dots,m\}$.
We propose computing the following statistic,
\begin{equation} \label{eq:loglikelihood_statistic}
    \Lambda(\omega) \coloneq \Lambda(S_1,\dots,S_m) \coloneq \sum_{i=1}^{m} \left[
        S_i \log\!\left(\frac{S_i}{|\omega|/2}\right)
        + (|\omega|-S_i)\log\!\left(\frac{|\omega|-S_i}{|\omega|/2}\right)
    \right].
\end{equation}
with the convention $0\log 0 \coloneq 0$. \cref{eq:loglikelihood_statistic} represents the difference in log-likelihood between the null hypothesis (all $S_i$ are Binomial $(|\omega|, 0.5)$) and the alternative that they are not Binomial.
In particular, $\Lambda$ is minimized when all $S_i = |\omega|/2$, which is the most likely case under the null.
Thus, to get a p-value, we use a Monte-Carlo estimation of
\begin{equation} \label{eq:our_pvalue}
    \text{p-value} \coloneq \mathbb{E}_{S_1,\dots,S_m}\!\left[\mathbf{1}\{\Lambda(\omega) > \Lambda(S_1,\dots,S_m)\}\right],
\end{equation}
where $S_1,\dots,S_m$ are sampled i.i.d.\ from Binomial$(|\omega|, 0.5)$.

\paragraph{Zero-Bit Detection Calibration}
To verify whether the zero-bit detection is well calibrated, we run the detection algorithm on $1000$ sequences of $200$ tokens extracted from human-generated text, in our case extracted from the C4 realnewslike dataset.
Then, we measure the number of texts flagged as watermarked (\ie empirical FPR) given an expected (theoretical) FPR.
If the p-values returned by the detector are sound, the empirical FPR should be below the theoretical FPR (\ie we should see a plot below the identity line).
In \cref{fig:zerobit_pvalue_calibration} (left), we show the empirical FPR given a theoretical FPR for two baselines, MPAC and StealthInk. 
We find that the suggested p-values are ill-calibrated.

\paragraph{Calibrate Zero-Bit Detectors}
For calibrated zero-bit p-values, we propose adjusting our Monte-Carlo approach from \cref{subsec:multibit_encoding_decoding} to prior works' statistics.
Specifically, for \citet{mpac,stealthink,mirrormark} we use the same statistic as the one suggested in their respective papers, but use Monte-Carlo estimation to estimate the null distribution.
For Cycle-Shift, we keep the original paper implementation as we find it is already calibrated and statistically sound.
In \cref{fig:zerobit_pvalue_calibration} (right), we show the empirical FPR given a theoretical FPR using the calibrated detector. 
We indeed find that, under the null, the empirical FPR is upper bounded by the theoretical FPR.
This means that the zero-bit p-values are now well-calibrated.

\section{Additional Experiments}
\label{app:additional_experiments}

\subsection{Zero-Bit Watermark Detection}
\label{app:additional_experiments:statistical_metrics}

In this part, we evaluate the zero-bit performance of our multibit watermark against relevant baselines.

\begin{figure}
    \centering
    \resizebox{\textwidth}{!}{\input{figures/appendix/tpr/comparison_tpr_main.pgf}}
    \caption{\textbf{Comparison of the Detectability-Quality Trade-Off:} We compare the trade-off between zero-bit detectability (TPR@1\%FPR) and text quality ($\log \text{PPL}$) across various multibit watermarks and three bitstring lengths ($16$, $32$, and $64$ bits). 
    Each sample contains between $250$ and $350$ tokens and is generated by \llama{} using prompts from \elifive{}.}
    \label{fig:app_comparison_tpr}
\end{figure}

\paragraph{Zero-Bit Detection}
While zero-bit detection could be encoded in the embedded bitstring (\eg by using integrity bits), here we measure the zero-bit detection ability using the p-value provided by the watermark detector (\eg for our watermark, using \cref{eq:our_pvalue}).
For ArcMark, BiMark, MC2Mark, and RSBH, the watermark decoder does not provide p-values and we therefore exclude them from this analysis.

In particular, using the same experimental setup as in \cref{sec:evaluation}, in \cref{fig:app_comparison_tpr}, we measure on each generated sample the true positive rate at 1\% false positive rate (TPR@1\%FPR), \ie the average number of p-values below $0.01$.
We find that, for all watermarking schemes, the TPR@1\%FPR decreases with the payload size.
This means that, for a fixed impact on quality and number of tokens, it is harder to determine (using statistical tools) whether the decoded bitstring differs from random noise.
Regarding our watermark, the stateless version (Ours) (\ie using the scores $\tilde{G}$ from \cref{subsec:multibit_encoding_decoding}) outperforms all prior schemes, and importantly significantly outperforms the stateful version (Ours+) from \cref{subsec:multibit_improved}.
This result is expected: in the stateful version, we assign higher importance to bits that are currently not correctly decoded.
This means that, while all bits are more likely to be correctly decoded, none of them particularly stands out compared to random noise.

\subsection{Bit Accuracy}
\label{app:additional_experiments:bit_accuracy}

In this part, following prior works, we compare our scheme's bit accuracy with all baselines.
We find that, in most cases, our scheme outperforms prior works.

\begin{figure}
    \centering
    \resizebox{\textwidth}{!}{\input{figures/main/comparison_bit_accuracy_main.pgf}}
    \caption{\textbf{Comparison of the Bit Accuracy-Quality Trade-Off:} We compare the trade-off between bit accuracy and text quality ($\log \text{PPL}$) for various multibit watermarks and three bitstring lengths ($16$, $32$, and $64$ bits). 
    Each sample contains between $250$ and $350$ tokens and is generated by \llama{} using prompts from \elifive{}.
    For Cycle-Shift, only payloads below $\log_2 |\Sigma| \approx 20$ bits are supported.
    For ArcMark, only payloads below $16$ bits are supported.
    For RSBH, \citet{rsbh} does not provide a configuration for a $64$-bit payload.}
    \label{fig:main_comparison_bitacc}
\end{figure}    

\begin{wrapfigure}{r}{0.5\textwidth}
    \centering
    \resizebox{\linewidth}{!}{\input{figures/main/comparison_bit_accuracy_token_length_main.pgf}}
    \caption{\textbf{Bit Accuracy with Respect to the Number of Tokens:} We compare how the bit accuracy scales with the number of generated tokens.
        The hyperparameters of each watermarking scheme have been selected to have a similar impact on quality, and all schemes use a bit string of $32$ bits.}
    \label{fig:token_length_bitacc}
    \vspace{-0.3in}
\end{wrapfigure}

\paragraph{Improved Bit Accuracy}
For each completion, we record the bit accuracy, \ie the percentage of bits that we decode correctly.
A bit accuracy of $0.5$ means that the decoded message is random.
In \cref{fig:main_comparison_bitacc}, we show the bit accuracy of the different watermarking algorithms for different bitstring lengths.
For schemes with a strength parameter, we vary the strength to show the bit accuracy-quality trade-off, where we measure quality via perplexity.
Our approach achieves a better bit accuracy-quality trade-off than prior work, especially in low-distortion settings (\ie low-perplexity settings) and at larger bitstring lengths.

\paragraph{Scaling with the Number of Tokens}
In \cref{fig:token_length_bitacc}, we show how the bit accuracy scales with the number of tokens.
For this experiment, we select all watermark hyperparameters to operate in a low-distortion regime (\ie we use $\varepsilon = 0$ for our watermark), which ensures a similar quality impact across the compared methods, and we use a $32$-bit bitstring.
Across all tested baselines, our scheme scales more favorably with the number of tokens.
At $500$ tokens, the bit accuracy of our scheme exceeds $90$\%, whereas the best baseline achieves around $80$\%.

\subsection{Additional Quality Metrics}
\label{app:additional_experiments:quality_metrics}

In this part, we compare the message accuracy-quality trade-off of our watermark against the baselines using benchmark accuracy as a proxy for quality instead of perplexity.

\begin{figure}
    \centering
    \resizebox{\textwidth}{!}{\input{figures/appendix/lm_eval/comparison_mer_main_lm_eval.pgf}}
    \caption{\textbf{Comparison of the Message Accuracy-Quality Trade-Off using LLM Benchmarks:} We compare the trade-off between message accuracy and text quality (average benchmark accuracy) for various multibit watermarks and three bitstring lengths ($16$, $32$, and $64$ bits). 
    Each sample contains between $250$ and $350$ tokens and is generated by \llama{} using prompts from \elifive{}.
    For Cycle-Shift, only payloads below $\log_2 |\Sigma| \approx 20$ bits are supported.
    For ArcMark, only payloads below $16$ bits are supported.
    For RSBH, \citet{rsbh} does not provide a configuration for a $64$-bit payload.}
    \label{fig:lmeval_comparison_bitacc}
\end{figure}

\paragraph{Experimental Setup}
For all watermark configurations, we measure benchmark accuracy using the EleutherAI evaluation harness~\citep{eval-harness} on GSM8k~\citep{gsm8k}, HumanEval~\citep{humaneval}, and MBPP~\citep{mbpp}.
We then report the average accuracy across the three benchmarks.

\paragraph{Improved Message Accuracy}
\cref{fig:lmeval_comparison_bitacc} shows that, when using benchmark accuracies as a proxy for quality, our approach still achieves a better message accuracy-quality trade-off than prior work.

\subsection{Additional Model}
\label{app:additional_experiments:mistral}

Here, we reproduce our main results using \mistral{} instead of \llama{}.

\begin{figure}
    \centering
    \resizebox{\textwidth}{!}{\input{figures/appendix/ministral/comparison_bit_accuracy_main_ministral.pgf}}
    \caption{\textbf{Comparison of the Bit Accuracy-Quality Trade-Off with \mistral{}:} We compare the trade-off between bit accuracy and text quality ($\log \text{PPL}$) for various multibit watermarks and three bitstring lengths ($16$, $32$, and $64$ bits). 
    Each sample contains between $250$ and $350$ tokens and is generated by \mistral{} using prompts from \elifive{}.
    For Cycle-Shift, only payloads below $\log_2 |\Sigma| \approx 20$ bits are supported.
    For ArcMark, only payloads below $16$ bits are supported.
    For RSBH, \citet{rsbh} does not provide a configuration for a $64$-bit payload.}
    \label{fig:ministral_comparison_bitacc}
\end{figure}    

\begin{figure}
    \centering
    \resizebox{\textwidth}{!}{\input{figures/appendix/ministral/comparison_mer_main_ministral.pgf}}
    \caption{\textbf{Comparison of the Message Accuracy-Quality Trade-Off with \mistral{}:} We compare the trade-off between message accuracy and text quality ($\log \text{PPL}$) for various multibit watermarks and three bitstring lengths ($16$, $32$, and $64$ bits). 
    Each sample contains between $250$ and $350$ tokens and is generated by \mistral{} using prompts from \elifive{}.}
    \label{fig:ministral_comparison_messacc}
\end{figure}    

\begin{figure}
    \centering
    \begingroup
    \ifdefined\mathdefault\else\def\mathdefault#1{#1}\fi
    \resizebox{\textwidth}{!}{\input{figures/appendix/ministral/comparison_ba_at_1pct_fpr_main_ministral.pgf}}
    \endgroup
    \caption{\textbf{Comparison of the Confidence Bit Accuracy-Quality Trade-Off with \mistral{}:} We compare the trade-off between confidence bit accuracy (BA@1\%FPR) and text quality ($\log \text{PPL}$) across several multibit watermarks and three bitstring lengths ($16$, $32$, and $64$ bits). 
    Each sample contains between $250$ and $350$ tokens and is generated by \mistral{} using prompts from \elifive{}.}
    \label{fig:bat_at_1_fpr_comparison_ministral}
\end{figure}

\paragraph{Experimental Setup}
We use the same experimental setup as in \cref{sec:evaluation}: for each watermark configuration, we generate completions of between $250$ and $350$ tokens for $1000$ prompts from \elifive{}, sample bitstrings uniformly at random, and measure quality using \qwen{} perplexity.
The only difference is that the watermarked text is generated with \mistral{} instead of \llama{}, using the inference parameters from \cref{app:experimental_details:inference_settings}.

\paragraph{Results}
In \cref{fig:ministral_comparison_bitacc,fig:ministral_comparison_messacc}, we observe the same trends as in \cref{fig:main_comparison_bitacc,fig:main_comparison_messacc}.
Our watermark achieves a better bit accuracy-quality trade-off than prior work, especially for larger bitstring lengths and in low-distortion regimes.
For message accuracy, our watermark also outperforms prior work at larger, more practical payloads ($32$ and $64$ bits).
As in the \llama{} evaluation, the ordering of the baselines differs between bit accuracy and message accuracy, since ArcMark, Cycle-Shift, and RSBH are designed to improve whole-message recovery rather than per-bit accuracy alone.
Finally, \cref{fig:bat_at_1_fpr_comparison_ministral} shows that the confidence-based BA@$1\%$ FPR metric follows the same pattern as with \llama{}: the stateless version of our watermark provides the best confidence bit accuracy-quality trade-off, whereas Ours+ underperforms on confidence-based metrics because it prioritizes the expected bit accuracy rather than per-bit confidence.

\subsection{Ablation on the Stateful Encoder Remaining Bits}
\label{app:additional_experiments:ablation_remaining_bits}

In this part, we ablate the choice of the remaining-token parameter $T$ from the stateful encoder from \cref{subsec:multibit_improved}.

We recall that, in \cref{subsec:multibit_improved}, $T$ denotes the total number of generated tokens used in the stateful encoder's expected per-bit recovery estimate. Hence, at generation step $t$, the quantity $T - t$ is the number of future token positions remaining under the worst-case null assumption. Varying this value changes how strongly the stateful encoder prioritizes bits whose current decoding statistic $d_t^i$ is close to the decision boundary. In the main experiments, we use the averaged score over $\mathcal{T} \coloneq \{200, 300, 500, 1000, 2000\}$.

\begin{wrapfigure}{r}{0.5\textwidth}
    \vspace{-0.3in}
    \centering
    \resizebox{0.5\textwidth}{!}{\input{figures/appendix/ablation_remaining_bits/comparison_mer_token_length_dynobino_remaining_bits_sentence-transformers_eli5.pgf}}
    \caption{\textbf{Ablation on the Remaining Bits $T$:} We compare the message accuracy of our stateful binomial encoder for different values of $T$ at different token lengths.}
    \label{fig:ablation_remaining_bits}
    \vspace{-0.3in}
\end{wrapfigure}

\paragraph{Experimental Setup}
To justify the averaged choice used in the main experiments, we ablate the values of $T$.
Specifically, we compute the message accuracy of our stateful encoder with $T \in \{64, 200, 300, 500, 1000, 2000\}$, as well as when averaging \cref{eq:stateful_scores} over $\mathcal{T}$ for different token lengths.

\paragraph{Results}
\cref{fig:ablation_remaining_bits} shows that, in most cases, using the same $T$ as the length of the generated text is optimal (\ie using $T = 2000$ when generating $2000$-token-long text).
When using the average, we find that it systematically remains close to the best-performing $T$.
The choice of $T$ has a significant effect on message accuracy: unlike bit accuracy, message accuracy is a whole-string metric, so moderate per-bit changes can translate into large differences in full-message recovery.
This result supports using the averaged score, since it avoids needing to estimate the generation length while retaining most of the gain from the best-matched $T$ values.

\subsection{Ablation on Position Allocation}
\label{app:additional_experiments:position_allocation}

\begin{wrapfigure}{r}{0.5\textwidth}
    \centering
    \resizebox{\linewidth}{!}{\input{figures/appendix/mpac_ours/comparison_mesacc_mpac_bino_n_segments_by_perplexity.pgf}}
    \caption{\textbf{Message Accuracy for Different Segment Sizes:} We compare how our method combines with position allocation by splitting the $32$ bits into $k \in \{1,4,8,16,32\}$ segments.}
    \label{fig:mpac_comparison}
    \vspace{-0.4in}
\end{wrapfigure}

Here, we verify whether our method, which encodes every bit at once using binomial encoding, improves over position allocation.

\paragraph{Experimental Setup}
We use the same experimental setup as in \cref{sec:evaluation}: for each watermark configuration, we generate completions of between $250$ and $350$ tokens with \llama{} for $1000$ prompts from \elifive{}, sample bit strings uniformly at random, and measure quality using \qwen{} perplexity.
For the watermark, we combine binomial encoding with position allocation.
We split the $m$-bit string into $k$ segments.
Then, at each generation step, we use the context hash to select which segment to encode and use binomial encoding to encode the $m / k$ bits of this segment.
In particular, using $k \coloneq 1$ corresponds to using our watermark as described in \cref{sec:method}, and using $k \coloneq m$ corresponds to using MPAC with Soft-PPL as the underlying watermarking scheme.

\paragraph{Binomial Encoding Outperforms Position Allocation}
\cref{fig:mpac_comparison} shows the message accuracy--quality trade-off for various numbers of segments ($k \in \{1,4,8,16,32\}$), as well as for our method using the stateful encoder (Ours+).
We find that, with the stateless encoder, binomial encoding outperforms position allocation: using a smaller number of segments gives a better message accuracy--quality trade-off in relevant settings, but the advantage shrinks as the number of segments decreases.
Importantly, when using the stateful encoder (which is not compatible with position allocation), our method significantly outperforms position allocation.
This finding confirms our intuition that position allocation can be limiting for the power of multibit watermarking algorithms.

\section{SynthID-Text Multibit Watermark}
\label{app:synthid_multibit}

In this section, we propose replacing the Soft PPL watermarking scheme (\cref{eq:soft_ppl_wm}) with the SynthID-Text watermarking scheme from \citet{synthid}.

\paragraph{Zero-Bit SynthID-Text}
At each token position $t$, let $\omega_{<t} \in \Sigma^{t-1}$ be the current context and $p_t \in \Delta(\Sigma)$ the next-token probability distribution given by the LLM.
With SynthID-text, each token $u \in \Sigma$ is assigned a sequence of $n_{\text{layer}}$ Bernoulli scores $[G_t(u,1),\dots,G_t(u,n_{\text{layer}})]$ with probability $0.5$ called g-values.
Then, to get the watermarked probability distribution we recursively iterate the following equation,
\begin{equation} \label{eq:vectorized_tournament_layer}
   \forall u \in \Sigma, \quad p_t^{k}(u) = p_t^{k-1}(u) \left[1 + G_t(u,k) - \sum_{v\in \Sigma} p_t^k(v) G_t(v,k) \right],
\end{equation}
with $p_t^0 \coloneqq p_t$. 
The final watermarked probability distribution is given by $p_t^{n_{\text{layer}}}$, and each iteration of \cref{eq:vectorized_tournament_layer} is called a layer.
Importantly, this process is distortion-free, which means that in expectation over the scores, $p_t^{n_{\text{layer}}}$ is equal to $p_t$.

\paragraph{Multibit SynthID-Text}
We focus on a single layer first and drop the layer index from the notation.
For a single layer, the zero-bit version consists in assigning a Bernoulli score to each token and then applying \cref{eq:vectorized_tournament_layer}.
We simply propose replacing the Bernoulli scores with the complementary final score $\tilde{G}_t$ from \cref{subsec:multibit_encoding_decoding}.
To keep the single layer transformation distortion-free, we need to adjust \cref{eq:vectorized_tournament_layer}, as explained in \citet{gloaguen2026unifiedframeworkllmwatermarks}, 
\begin{equation} \label{eq:vectorized_tournament_layer_multibit}
   \forall u \in \Sigma, \quad q_t(\tilde{G},p_t)(u) = p_t(u) \left[1 + \sigma \left( \tilde{G}_t(u) - \sum_{v\in \Sigma} p_t(v) \tilde{G}_t(v) \right) \right],
\end{equation}
where $\sigma \coloneqq \left( \max_{v\in \Sigma} \tilde{G}_t(v) - \min_{v\in\Sigma} \tilde{G}_t(v) \right)^{-1}$.
For the $n_{\text{layer}}$ version, we simply iterate \cref{eq:vectorized_tournament_layer_multibit} using at each layer a different seed to sample the $m$ Bernoulli variables and encode the same bitstring $M$ with complementary scores (\cref{eq:bernoulli_transform}).

For detection, we adjust the weighted mean detector from \citet{synthid}, with
\begin{equation}
    \hat{M} \coloneqq \text{round}\!\left(
        \frac{1}{|\omega|\,n_{\text{layer}}}
        \sum_{t=1}^{|\omega|}
        \sum_{k=1}^{n_{\text{layer}}}
        w_k\,[G_t^1(\omega_t,k),\dots,G_t^m(\omega_t,k)]
    \right),
\end{equation}
where $w_1,\dots,w_{n_{\text{layer}}}$ are linearly decaying weights (from $10$ down to $1$) normalized such that $\sum_{k=1}^{n_{\text{layer}}} w_k = n_{\text{layer}}$.

\begin{figure}
    \centering
    \resizebox{\textwidth}{!}{\input{figures/appendix/synthid/comparison_ba_at_1pct_fpr_synthid_sweep.pgf}}
    \caption{\textbf{Comparison of the Detectability of Various SynthID Layers:} We compare the BA@1\%FPR for various numbers of SynthID layers $n_{\text{layer}}$ across three bitstring lengths ($16$, $32$, and $64$ bits).
    Each sample contains between $250$ and $350$ tokens and is generated by \llama{} using prompts from \elifive{}.}
    \label{fig:synthid_sweep}
\end{figure}

\begin{table*}[t]
    \centering
    \caption{\textbf{Distortion-Free Watermark Evaluation:} Message accuracy, BA@1\%FPR, and bit accuracy (BA) for 16-bit, 32-bit, and 64-bit bitstrings. All watermarks evaluated here are distortion-free, including Ours using SynthID. Missing values indicate that either the scheme does not support the metric or the bitstring length. Metrics are averaged over $1000$ samples where each sample contains between $250$ and $350$ tokens and is generated by \llama{} using prompts from \elifive{}.
    The highest value in each metric column is \textbf{bolded}.}
    \label{tab:distortion_free_metrics_main}
    \renewcommand{\arraystretch}{1.15}
    \resizebox{\linewidth}{!}{%
    \begin{tabular}{lccccccccc}
    \toprule
     & \multicolumn{3}{c}{16 bits} & \multicolumn{3}{c}{32 bits} & \multicolumn{3}{c}{64 bits} \\
    \cmidrule(lr){2-4}\cmidrule(lr){5-7}\cmidrule(lr){8-10}
      Watermark & Message Acc. & BA@1\%FPR & BA & Message Acc. & BA@1\%FPR & BA & Message Acc. & BA@1\%FPR & BA \\
    \midrule
    Cycle-Shift & \textbf{0.778} & -- & \textbf{0.889} & -- & -- & -- & -- & -- & -- \\
    ArcMark & 0.417 & -- & 0.707 & -- & -- & -- & -- & -- & -- \\
    BiMark & 0.114 & -- & 0.848 & 0.002 & -- & 0.762 & 0.000 & -- & 0.690 \\
    MC2Mark & 0.097 & -- & 0.843 & 0.000 & -- & 0.762 & 0.000 & -- & 0.694 \\
    MirrorMark & 0.057 & 0.077 & 0.770 & 0.000 & 0.034 & 0.699 & 0.000 & 0.019 & 0.639 \\
    StealthInk & 0.011 & -- & 0.744 & 0.000 & -- & 0.678 & 0.000 & -- & 0.627 \\
    \midrule
    Ours (SynthID) & 0.158 & \textbf{0.589} & 0.872 & \textbf{0.003} & \textbf{0.413} & \textbf{0.793} & 0.000 & \textbf{0.270} & \textbf{0.718} \\
    Ours+ (SynthID) & 0.045 & 0.446 & 0.816 & 0.000 & 0.322 & 0.749 & 0.000 & 0.190 & 0.687 \\
    \bottomrule
    \end{tabular}
    }
\end{table*}

\paragraph{Calibrating the Number of Layers}
In \cref{fig:synthid_sweep}, we compute the BA@1\%FPR for various numbers of tournament layers $n_{\text{layer}}$.
We find that using the multibit SynthID variant requires a higher number of layers $n_{\text{layer}}$ than the original zero-bit watermark, with detectability reaching a maximum around $n_{\text{layer}} = 100$.
We also find that using the stateful encoder from \cref{subsec:multibit_improved} (\ie replacing $\tilde{G}$ with $\tilde{G}'$) does not improve the bit accuracy.
We hypothesize that this occurs because the optimal number of layers depends on the distribution of the random scores.
With the stateful encoder, the distribution of scores is different at each step, but for detection we must keep a fixed number of tournament layers.
Therefore, the potential gains in accuracy from using the stateful encoder are negated by the loss incurred from using a suboptimal number of tournament layers.

\paragraph{Results: Our Scheme with SynthID Outperforms Most Distortion-Free Watermarks}
\cref{tab:distortion_free_metrics_main} shows the message accuracy, BA@1\%FPR, and bit accuracy for all considered distortion-free schemes, including ours with SynthID (using $n_{\text{layer}} = 100$).
We find that in most practical scenarios (\eg larger bitstrings), our scheme outperforms all prior baselines on all considered metrics.
At the same time, as anticipated, the stateful encoder does not improve message accuracy with SynthID compared to the stateless encoder.

\section{Unconstrained Soft PPL Scheme}
\label{app:unconstrained_soft_ppl}

\begin{figure}
    \centering
    \begingroup
    \ifdefined\mathdefault\else\def\mathdefault#1{#1}\fi
    \resizebox{\textwidth}{!}{\input{figures/appendix/unconstrained/comparison_mesacc_main_ppl_unconstrained.pgf}}
    \endgroup
    \caption{\textbf{Comparison of the Message Accuracy-Quality Trade-Off for Unconstrained Soft PPL:} We compare the trade-off between message accuracy and text quality ($\log \text{PPL}$) for the constrained Soft PPL parametrization (Ours) and the unconstrained parametrization using $\lambda$ as a scheme parameter (Ours (unc.)), for three bitstring lengths ($16$, $32$, and $64$ bits). Each sample contains between $250$ and $350$ tokens and is generated by \llama{} using prompts from \elifive{}.}
    \label{fig:unconstrained_comparison_messacc}
\end{figure}

\begin{figure}
    \centering
    \begingroup
    \ifdefined\mathdefault\else\def\mathdefault#1{#1}\fi
    \resizebox{\textwidth}{!}{\input{figures/appendix/unconstrained/comparison_ba_at_1pct_fpr_main_ppl_unconstrained.pgf}}
    \endgroup
    \caption{\textbf{Comparison of the Confidence Bit Accuracy-Quality Trade-Off for Unconstrained Soft PPL:} We compare the trade-off between confidence bit accuracy (BA@$1\%$ FPR) and text quality ($\log \text{PPL}$) for the constrained Soft PPL parametrization (Ours) and the unconstrained parametrization using $\lambda$ as a scheme parameter (Ours (unc.)), for three bitstring lengths ($16$, $32$, and $64$ bits). Each sample contains between $250$ and $350$ tokens and is generated by \llama{} using prompts from \elifive{}.}
    \label{fig:unconstrained_bat_at_1_fpr_comparison}
\end{figure}

In this section, we propose replacing the Lagrangian constraint from the Soft PPL watermarking scheme (\cref{eq:soft_ppl_wm}) with an unconstrained version.
We find that, while being computationally more efficient, this re-parametrization yields similar performance to the original one.

\paragraph{Experimental Setup} We use the same experimental setup as in \cref{sec:evaluation}: for each watermark configuration, we generate completions of between $250$ and $350$ tokens for $1000$ prompts from \elifive{}, sample bitstrings uniformly at random, and measure quality using \qwen{} perplexity. We compare the constrained version used in the main paper, where $\lambda$ is obtained by solving the Soft PPL constraint equation in \cref{eq:soft_ppl_wm} (effectively using $\varepsilon$ as a scheme parameter), against \emph{Ours (unc.)}, where $\lambda$ is varied directly as the scheme parameter.

\paragraph{Results} As shown in \cref{fig:unconstrained_comparison_messacc,fig:unconstrained_bat_at_1_fpr_comparison}, the constrained and unconstrained versions yield similar detectability-quality trade-offs across payload sizes. The unconstrained version is more computationally efficient because it avoids repeatedly estimating the expectation over watermark scores and solving for $\lambda$ at each generation step. However, it is harder to parametrize and compare across settings: a fixed $\varepsilon$ has the same expected quality impact across bitstring lengths by construction, whereas the quality impact of a fixed $\lambda$ depends on the bitstring length.
This is why we use the constrained version in our main experiments despite the increased computational cost.

\section{Broader Impacts and Licenses}

\subsection{Societal Impacts}
\label{app:societal_impacts}

\paragraph{Positive Impacts}
Multibit LLM watermarks can support provenance mechanisms for generated text.
Compared to zero-bit watermarks, they can encode richer metadata such as user identifiers, timestamps, or licensing information, which may help platform operators audit generated content, investigate misuse, and provide transparency around model-generated text.
Our work also emphasizes confidence-calibrated decoding, which can reduce overconfident interpretation of weak watermark signals.

\paragraph{Risks and Limitations}
The same ability to encode richer metadata can create privacy risks if identifiers are embedded without an appropriate governance policy, user notice, access control, or retention policy.
Watermark detectors may also be misused as evidence of authorship beyond their statistical scope, especially when text is edited, paraphrased, or generated by unwatermarked systems.
Our robustness experiments in \cref{sec:evaluation:robustness} show that multibit watermark recovery remains fragile under text modifications, so deployment should not rely on these methods as the only provenance or abuse-prevention mechanism.
Finally, a provider-controlled watermark can create asymmetric power: users may be unable to inspect what metadata is embedded or who can decode it.
Responsible deployment therefore requires clear disclosure, limited and purpose-specific payloads, key-management procedures, and calibration thresholds that reflect the intended decision risk.

\subsection{Existing Assets}
\label{app:societal_impacts:existing_assets}

\paragraph{Datasets}
\begin{itemize}
    \item \textbf{\elifive{}~\citep{eli5}.} We use questions from \elifive{} as prompts. The official repository for recreating \elifive{} is distributed under a BSD license for the accompanying software, but the dataset itself is derived from Reddit and CommonCrawl content.
    \item \textbf{C4 realnewslike.} We use the realnewslike subset of C4 to estimate frequencies for RSBH and to calibrate zero-bit detection on human-generated text. The Hugging Face dataset card lists C4 under ODC-BY and notes that use is also subject to the Common Crawl terms for the underlying content.
    \item \textbf{GSM8K~\citep{gsm8k}.} We use GSM8K through the EleutherAI evaluation harness as one of the benchmark-quality metrics. The Hugging Face dataset card lists the license as MIT.
    \item \textbf{HumanEval~\citep{humaneval}.} We use HumanEval through the EleutherAI evaluation harness as one of the benchmark-quality metrics. The official OpenAI repository is distributed under the MIT license.
    \item \textbf{MBPP~\citep{mbpp}.} We use MBPP through the EleutherAI evaluation harness as one of the benchmark-quality metrics. The Hugging Face dataset card for the Google Research dataset lists the license as CC-BY-4.0.
\end{itemize}

\paragraph{Models}
\begin{itemize}
    \item \textbf{\llama{}.} We use the instruction-tuned Llama 3.1 8B model for the main generation experiments. The Hugging Face model card lists the model under the Llama 3.1 Community License and requires gated access.
    \item \textbf{\mistral{}.} We use Ministral 3 14B Instruct for the additional-model experiments. The Hugging Face model card lists Apache-2.0 as the license.
    \item \textbf{\qwen{}.} We use Qwen3-30B-A3B to measure perplexity of generated samples. The Hugging Face model card lists Apache-2.0 as the license.
\end{itemize}

\section{Proof of Lemma \ref{lemma:expected_bit_acc_under_null}}
\label{app:proofs}

\begingroup
\renewcommand{\thelemma}{\ref{lemma:expected_bit_acc_under_null}}
\begin{lemma}
\ExpectedBitAccUnderNullStatement{}
\end{lemma}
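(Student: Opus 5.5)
The plan is to write $d_T^i$ as $d_t^i$ plus a sum of independent symmetric increments, and then apply the central limit theorem to that sum. From \cref{eq:dit_definition} we have
\begin{equation*}
d_T^i = d_t^i + \sum_{k=t+1}^{T} X_k, \qquad X_k \coloneq 2\tilde{G}_k^i(\omega_k) - 1 \in \{-1,+1\}.
\end{equation*}
The first step is to pin down the distribution of the $X_k$ under the stated assumption that future tokens are generated independently of the watermark. The token $\omega_k$ is then chosen without reference to the scores $G_k^i$. Since $G_k^i(u)$ is Bernoulli$(0.5)$ for every candidate $u$, so is $G_k^i(\omega_k)$, and hence so is its complement $\tilde{G}_k^i(\omega_k)$ from \cref{eq:bernoulli_transform}. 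Each $X_k$ is therefore a Rademacher variable.

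For independence across $k$, I will use the standard idealization already invoked in \cref{subsec:multibit_encoding_decoding}. Fresh pseudorandom scores are drawn at each position, with contexts assumed distinct, which can be enforced by the de-duplication from \cref{app:experimental_details:our_watermark_config}. Under this idealization the $X_k$ are i.i.d.\ Rademacher and independent of $d_t^i$, since $d_t^i$ is a function of $\omega_{\le t}$ only.

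With this in hand, $S \coloneq \sum_{k=t+1}^{T} X_k$ has mean $0$ and variance $T-t$, and $S = 2B-(T-t)$ with $B\sim\mathrm{Binomial}(T-t,1/2)$. Conditioning on $d_t^i$, the target probability becomes
\begin{equation*}
P[d_T^i \ge 0 \mid d_t^i] = P[S \ge -d_t^i] = P\!\left[\frac{S}{\sqrt{T-t}} \ge -\frac{d_t^i}{\sqrt{T-t}}\right].
\end{equation*}
The central limit theorem gives $S/\sqrt{T-t} \approx \mathcal{N}(0,1)$. Using the symmetry $1-\Phi(-x)=\Phi(x)$, the right-hand side is approximately $\Phi\bigl(d_t^i/\sqrt{T-t}\bigr)$, which is the claim.

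The main obstacle is making ``$\approx$'' precise. I would quantify it with Berry--Esseen. The $X_k$ have $\mathbb{E}|X_k|^3=1$ and unit variance, so the error is uniformly at most $C/\sqrt{T-t}$ for an absolute constant $C$ (one can take $C\le 0.56$). This already accounts for lattice and tie effects, which are of order $(T-t)^{-1/2}$; a continuity correction, replacing $d_t^i$ by $d_t^i+1$, would tighten the approximation but is not needed. A secondary concern is the independence idealization itself under repeated contexts. I would state it explicitly as part of the ``generated independently of the watermark'' hypothesis rather than prove it.
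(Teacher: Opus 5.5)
Your proposal is correct and follows the paper's proof essentially step by step. Both decompose $d_T^i = d_t^i + \sum_{k=t+1}^{T} X_k^i$ into i.i.d.\ Rademacher increments, condition on $d_t^i$, and apply the central limit theorem. Your Berry--Esseen bound of order $(T-t)^{-1/2}$ and your explicit independence idealization are useful refinements: the paper leaves the approximation as an informal ``$\approx$'' and the independence as an implicit assumption.
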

\addtocounter{lemma}{-1}
\endgroup

\begin{proof}[Proof of Lemma \ref{lemma:expected_bit_acc_under_null}]
Fix a bit index $i \in \{1,\dots,m\}$, where $m$ is the message length in bits.
Recall that $\omega_{\le t} \in \Sigma^t$ is the generated prefix of length $t$, that $\omega_k$ is the realized token at position $k$, and that $\tilde{G}_k^i(\omega_k)\in\{0,1\}$ is the complementary score assigned to that realized token for bit $i$ at position $k$.
Also recall that
\begin{equation}
    d_t^i = 2 \sum_{k=1}^{t} \tilde{G}_k^i(\omega_k) - t
\end{equation}
is the signed decoding statistic of bit $i$ after the first $t$ generated tokens, and that $T$ is the total number of generated tokens, so $T-t$ is the number of future token positions.

For each future position $k \in \{t+1,\dots,T\}$, define the increment
\begin{equation}
    X_k^i \coloneq 2\tilde{G}_k^i(\omega_k)-1 \in \{-1,+1\}.
\end{equation}
Because the future tokens are assumed to be generated independently of the watermark, each future complementary score $\tilde{G}_k^i(\omega_k)$ is Bernoulli$(0.5)$.
Equivalently, the increments $X_{t+1}^i,\dots,X_T^i$ are i.i.d.\ Rademacher random variables with mean $0$ and variance $1$.
Therefore the final signed decoding statistic
\begin{equation}
    d_T^i = 2 \sum_{k=1}^{T} \tilde{G}_k^i(\omega_k) - T
\end{equation}
decomposes as
\begin{equation}
    d_T^i = d_t^i + \sum_{k=t+1}^{T} X_k^i.
\end{equation}

Conditioning on the current statistic $d_t^i$, we get
\begin{equation}
    P[d_T^i \ge 0 \mid d_t^i]
    =
    P\left[\sum_{k=t+1}^{T} X_k^i \ge -d_t^i \,\middle|\, d_t^i\right].
\end{equation}
Now $\sum_{k=t+1}^{T} X_k^i$ is the sum of exactly $T-t$ i.i.d.\ mean-zero, variance-one random variables, where $T-t$ is the number of future positions.
Hence, by the central limit theorem,
\begin{equation}
    \frac{1}{\sqrt{T-t}} \sum_{k=t+1}^{T} X_k^i \approx Z,
    \qquad Z \sim \mathcal{N}(0,1).
\end{equation}
Substituting this Gaussian approximation into the previous display yields
\begin{equation}
    P[d_T^i \ge 0 \mid d_t^i]
    \approx
    P\left[ Z \ge -\frac{d_t^i}{\sqrt{T-t}} \right]
    =
    \Phi\left(\frac{d_t^i}{\sqrt{T-t}}\right),
\end{equation}
where $\Phi$ is the CDF of the standard normal distribution.
This is exactly \cref{eq:expected_bit_accuracy_formula}.
\end{proof}

\fi

\end{document}